\newcommand{\dhcre}{Detailed DHC-B\xspace}
\newcommand{\safetabh}{SafeTab-H\xspace}
\theoremstyle{definition}
\newtheorem{definition}{Definition}
\theoremstyle{plain}
\newtheorem{theorem}{Theorem}
\newtheorem{lemma}{Lemma}
\newtheorem{corollary}{Corollary}
\newtheorem{proposition}{Proposition}
\newcommand{\eat}[1]{}
\title{SafeTab-H: Disclosure Avoidance for the 2020 Census Detailed Demographic and Housing Characteristics File B (Detailed DHC-B)}
\author[1]{William Sexton}
\author[1]{Skye Berghel}
\author[1]{Bayard Carlson}
\author[1]{Sam Haney}
\author[1]{Luke Hartman}
\author[1]{Michael Hay}
\author[1]{Ashwin Machanavajjhala}
\author[1]{Gerome Miklau}
\author[1]{Amritha Pai}
\author[1]{Simran Rajpal}
\author[1]{David Pujol}
\author[1]{Ruchit Shrestha}
\author[1]{Daniel Simmons-Marengo}
\affil[1]{Tumult Labs}
\date{May 28, 2024}
\begin{document}

\maketitle
\begin{abstract}
This article describes SafeTab-H, a disclosure avoidance algorithm applied to the release of the U.S. Census Bureau's Detailed Demographic and Housing Characteristics File B (Detailed DHC-B) as part of the 2020 Census. The tabulations contain household statistics about household type and tenure iterated by the householder's detailed race, ethnicity, or American Indian and Alaska Native tribe and village at varying levels of geography. We describe the algorithmic strategy which is based on adding noise from a discrete Gaussian distribution and show that the algorithm satisfies a well-studied variant of differential privacy, called zero-concentrated differential privacy. We discuss how the implementation of the SafeTab-H codebase relies on the Tumult Analytics privacy library. We also describe the theoretical expected error properties of the algorithm and explore various aspects of its parameter tuning.
\end{abstract}

\newpage
\tableofcontents
\newpage

\section{Introduction}

It is the responsibility of the U.S. Census Bureau (Census Bureau) to conduct a census of the U.S. population every 10 years. The 2020 Census is the latest of such efforts, which aims to enumerate every person living in the United States. As part of the 2020 Census undertaking, the Census Bureau manages 35 operations (e.g., Address Canvassing, Nonresponse Followup, and Redistricting Data Program) \cite{census-operations}. Each of these operations controls a number of systems that handle various aspects of the entire census endeavor, ranging from data collection and processing to the dissemination of data products to the U.S. people. Throughout these 2020 Census procedures, the Census Bureau strives to maintain the privacy and confidentiality of its respondents. The Disclosure Avoidance System (DAS) manages the confidentiality protection of statistical data releases from the 2020 Census. The DAS executes its duties after census responses have been collected and processed into a database known as the Census Edited File (CEF) but before data products are released for public consumption. For the 2020 Census, the DAS was redesigned to modernize its privacy protection mechanisms. The modernization effort provides individuals and households with state-of-the-art protection against the privacy threats associated with releasing census data to the public.

The Census Bureau releases several different data products for the 2020 Census, including the 2020 Census Redistricting Data (P.L. 94-171) Summary File, Demographic and Housing Characteristics File (DHC), Demographic Profile, Detailed Demographic and Housing Characteristics File A (Detailed DHC-A), Detailed Demographic and Housing Characteristics File B (Detailed DHC-B), and Supplemental Demographic and Housing Characteristics File (S-DHC). Each data product exhibits distinct challenges concerning confidentiality protection. Hence, the DAS deploys various algorithms to optimize protection and accuracy across each data product. We note that some data products share similar enough challenges to utilize the same algorithm. Despite algorithmic differences, all statistical disclosure limitation techniques fit into the same overarching privacy framework known as \emph{differential privacy}. The differential privacy framework calls for the design of algorithms that satisfy mathematically provable guarantees regarding the data publication process. Section \ref{sec:privacy-defintions} gives further details. Although the differential privacy framework encompasses several different privacy definitions, in this paper, one should assume we use the term "differential privacy" to refer to zero-concentrated differential privacy unless otherwise specified.

SafeTab-P and SafeTab-H are two of the privacy algorithms deployed by the DAS. SafeTab-P provides privacy protection for the Detailed DHC-A whereas SafeTab-H was designed to provide differential privacy guarantees for the Detailed DHC-B. SafeTab-H is the primary topic of this paper, but we mention SafeTab-P for two reasons: (1) both algorithms share many similarities because the Detailed DHC-A and Detailed DHC-B are closely related data products and (2) some outputs of the SafeTab-P algorithm appear as inputs to the SafeTab-H algorithm. We refer the reader to Section \ref{sec:problem} for an in-depth description of the data product and privacy release problem.

The main goals of this article are threefold:

\begin{enumerate}
    \item Describe the SafeTab-H algorithm and how it meets the requirements of the Detailed DHC-B.
    \item Prove the privacy properties of the SafeTab-H algorithm.
    \item Describe the parameters in the SafeTab-H algorithm and how they impact privacy-accuracy trade-offs.
\end{enumerate}

For the first point, we provide a technical pseudocode description of SafeTab-H that details how privacy protection is applied to create the Detailed DHC-B tabular summaries (Section \ref{sec:algorithm-description}). We also highlight salient differences between our pseudocode abstraction and the programmed codebase of the algorithm (Section \ref{sec:implementation}). While this article provides meaningful context regarding the implemented algorithm, it does not provide a detailed overview of the code architecture (e.g., module interactions and class descriptions). The SafeTab-H codebase is open source for interested readers.

For the second point, we provide relevant background material on the differential privacy framework and explain why SafeTab-H adheres to the framework (Section \ref{sec:safetab-discrete-gaussian-privacy}).

For the third point, we discuss the parameters in SafeTab-H that impact the privacy-accuracy trade-offs of the algorithm. We cover parameter tuning, including a brief look at related data accuracy considerations (Section \ref{sec:params}).

\section{Problem Setup}
\label{sec:problem}
The Detailed DHC-B contains statistics (counts) of household type and tenure, including total household count, for households in the United States and Puerto Rico, crossed with detailed races and ethnicities at varying levels of geography. The Detailed DHC-B includes data for 300 detailed race and ethnicity groups and 1,187 American Indian and Alaska Native tribes and villages. Geographies include nation, state, county, census tract, place, and American Indian/Alaska Native/Native Hawaiian (AIANNH) areas. The Detailed DHC-A included total population and sex by age data for the same detailed races and ethnicities and geographies. The Census Bureau published the Detailed DHC-A in September 2023. In this section, we define relevant concepts, outline the statistics to be released, and then formulate the differentially private algorithm design problem. 

\subsection{Households}
A household is an individual or group of individuals living together in an occupied housing unit. The Detailed DHC-B does not include vacant housing units. For our purposes, we treat ``household'' and ``occupied housing unit'' as interchangeable terms. In the CEF, every household has exactly one person designated as the householder. Every household has a household type (e.g., married couple family household) and a tenure status (e.g., renter occupied).   

\subsection{Geography}
\label{sec:geography}
Every household is located in exactly one Census block that determines its geographic location. Census blocks are the most granular form of \textit{geographic entities}. All other geographic entities (e.g., Los Angeles County, the state of California, and the United States) are aggregations of Census blocks. Geographic entities are divided into \emph{geographic summary levels}. A geographic summary level is a set of nonoverlapping geographic entities, such as the set of all states or the set of all counties.  The Detailed DHC-B produces statistics for the following geographic summary levels:

\begin{itemize}
    \item Nation
    \item State or State equivalents
    \item County or County equivalents
    \item Census Tract
    \item Place
    \item AIANNH areas.
\end{itemize}

Henceforth, we tend to write State (County) without including the ``or State (County) equivalent'' qualifier, although one should assume the qualifier when applicable. Washington, D.C. is an example of a state equivalent. With some context-specific exceptions, this document adopts the convention of capitalizing references to levels and using lowercase when referencing an entity or entities within a level (e.g., there are over 3000 counties in the County level, and there are four counties in the state of Rhode Island).

\subsection{Race and Ethnicity}
\label{sec:race-defs}
In the 2020 Census, the Census Bureau collected detailed race and ethnicity data from individuals in accordance with the 1997 Federal Register Notice ``Revisions to the Standards for the Classification of Federal Data on Race and Ethnicity'' released by the Office of Management and Budget (OMB). 

Per these guidelines, every household is associated with one or more \textit{race codes} of its householder and a single \textit{ethnicity code} of its householder. That is, a household's race and ethnicity assignment is based solely on the attributes of the householder, even if other individuals in the household have differing race or ethnicity codes. The maximum number of race codes, called the \textit{race multiplicity}, that a householder, and hence household, can be associated with is limited to eight by the 2020 Census data collection procedures. 

A \textit{race group} is a set of race codes (e.g., German is defined by race codes ranging from 1170 through 1179). Similarly, an \textit{ethnicity group} is a set of ethnicity codes (e.g., Mexican is defined by ethnicity codes ranging from 2010 through 2099). Hence, householders with differing race or ethnicity codes can nonetheless belong to the same race or ethnicity groups.

\textit{Detailed} race or ethnicity groups are the most disaggregated racial or ethnic group classifications for which the Census Bureau publishes data.  Examples of detailed racial or ethnic groups include Lebanese, Dutch, Guatemalan, Puerto Rican, Ethiopian, Nigerian, Mongolian, Thai, Brazilian, Belizean, Samoan, Marshallese, Chevak Native Village, and Navajo Nation. The \textit{major} racial categories are aggregated race groupings that represent the minimum allowable racial categories for which census data may be published. The major racial categories in the 2020 Census were White, Black or African American, American Indian or Alaska Native, Asian, Native Hawaiian or Other Pacific Islander, and Some Other Race. The aggregated ethnic equivalent of the major racial categories is a coarse binary classification (Hispanic or Latino, Not Hispanic or Latino). \textit{Regional} race or ethnicity groups provide an intermediate level of aggregation. Examples of regional racial or ethnic groups include European, Central American, Caribbean, Sub-Saharan African, Alaska Native, and American Indian. Subject-matter experts at the Census Bureau determined which race or ethnicity groups are detailed and which groups are regional. For the purposes of SafeTab-H, we take these classifications as given exogenous factors. The universe specification of valid detailed race and ethnicity groups and their classification into detailed or regional groups occurred before data collection for the 2020 Census \cite{race-blog}. Appendix G of \cite{race-spec} provides a complete enumeration of the detailed race and ethnicity groups. 

A household is in a race group \textit{alone} if all race codes associated with its householder are contained in the race group. For example, if a householder self-identifies with the single race code for Navajo Nation and no other race codes, the household would belong to the detailed race group Navajo Nation alone. Alternatively, a householder may report multiple race groups (e.g., British, Scottish, and Dutch) that aggregate into the same regional group (European alone). A household is in a race group \textit{alone or in any combination} if any race code associated with its householder is contained in the race group. This concept pertains to households where the householder self-identifies with a single detailed race (e.g., British) or with multiple detailed races (e.g., British and Thai). In both examples, the household belongs to the detailed British race group alone or in any combination. The household also belongs to the regional European race group alone or in any combination. Since all householders are only associated with a single ethnicity code, respondents may only be in one detailed ethnicity group and in one regional ethnicity group.

A \textit{race characteristic iteration} is a race group combined with the specification of either ``alone'' or ``alone or in any combination'' (e.g., Latin American Indian alone or in any combination is a characteristic iteration). An \textit{ethnicity characteristic iteration} is synonymous with an ethnicity group. Ethnicity characteristic iterations do not carry either ``alone'' or ``alone or in any combination'' designators. One household may be associated with multiple characteristic iterations. Like geographical entities, characteristic iterations are also divided into \emph{characteristic iteration levels}. We have already provided the defining aspect of these iteration levels: namely, the concepts of \emph{detailed} and \emph{regional} race groups. We adopt the convention of capitalizing references to Detailed and Regional levels while using lowercase for references to detailed and regional iterations within a level. The Detailed characteristic iteration level consists of the set of characteristic iterations for all detailed race groups either alone or alone or in any combination (e.g., Japanese alone, Japanese alone or in any combination, Celtic alone, and Celtic alone or in any combination) and all detailed ethnicity groups. The Regional characteristic iteration level consists of the set of characteristic iterations for all regional race alone or alone or in any combination (e.g., Middle Eastern or North African alone, Middle Eastern or North African alone or in any combination, Polynesian alone, and Polynesian alone or in any combination) as well as all regional ethnicity groups. We intentionally omit the notion of a major race and ethnicity characteristic iteration level, as no statistics for this level are produced by the SafeTab-H algorithm for the Detailed DHC-B.

\subsection{Population Groups}\label{sec:population-group-levels}

A \textit{population group} is a pair $(g, c)$, where $g$ is a geographic entity (e.g., the state of North Carolina), and $c$ is a characteristic iteration (e.g., Latin American Indian alone or in any combination). Population groups are divided into \textit{population group levels}. We will often identify a population group level by specifying a (geography level, characteristic iteration level) pair. However, each population group level is really a set of population groups, where each population group's geographic entity belongs to the specified geography level and its characteristic iteration belongs to the specified characteristic iteration level. More formally, the Detailed DHC-B requires the publication of statistics for the following population group levels:

\begin{itemize}
    \item (Nation, Detailed) $\equiv$ $\{(g, c): g \text{ is the nation}, c \text{ is a detailed characteristic iteration}\}$
    \item (State, Detailed) $\equiv$ $\{(g, c): g \text{ is a state}, c \text{ is a detailed characteristic iteration}\}$
    \item (County, Detailed) $\equiv$ $\{(g, c): g \text{ is a county)}, c \text{ is a detailed characteristic iteration}\}$
    \item (Census Tract, Detailed) $\equiv$ $\{(g, c): g \text{ is a Census tract}, c \text{ is a detailed characteristic iteration}\}$
    \item (Place, Detailed) $\equiv$ $\{(g, c): g \text{ is a place}, c \text{ is a detailed characteristic iteration}\}$
    \item (AIANNH, Detailed) $\equiv$ $\{(g, c): g \text{ is an AIANNH area}, c \text{ is a detailed characteristic iteration}\}$
    \item (Nation, Regional) $\equiv$ $\{(g, c): g \text{ is the nation}, c \text{ is a regional characteristic iteration}\}$
    \item (State, Regional) $\equiv$ $\{(g, c): g \text{ is a state}, c \text{ is a regional characteristic iteration}\}$
    \item (County, Regional) $\equiv$ $\{(g, c): g \text{ is a county}, c \text{ is a regional characteristic iteration}\}$
    \item (Census Tract, Regional) $\equiv$ $\{(g, c): g \text{ is a Census tract}, c \text{ is a regional characteristic iteration}\}$
    \item (Place, Regional) $\equiv$ $\{(g, c): g \text{ is a place}, c \text{ is a regional characteristic iteration}\}$
\end{itemize}

In practice, some detailed or regional characteristic iterations may be omitted from the tabulations in a geography level. In other words, the above population group levels are supersets of the actual population group levels. This is done in accordance with specifications for the Detailed DHC-B provided by the Census Bureau. There is no concise representation for the exact level sets. The population group level (AIANNH, Regional) is intentionally omitted from the Detailed DHC-B.

One household may belong to multiple population groups in the set that comprises a population group level. For example, a household located in Texas with a householder reporting Kenyan and Ghanaian detailed races would belong to the (Texas, Kenyan alone or in any combination) and the (Texas, Ghanaian alone or in any combination) population groups which are both contained in the population group level identified by (State, Detailed). A household located in Schuyler County, NY with a householder reporting a single detailed race of Dutch would still belong to the (Schuyler County, NY, Dutch alone) and the (Schuyler County, NY, Dutch alone or in any combination) population groups which are both contained in the level identified by (County, Detailed). One household may be connected with population groups across multiple population group levels. The household with the Dutch householder residing in Schuyler County, NY would additionally belong to the (NY, Dutch alone) and (NY, Dutch alone or in any combination) population groups in the (State, Detailed) level, the (Schuyler County, NY, European alone) and (Schuyler County, NY, European alone or in any combination) population groups in the (County, Regional) level, etc. It is possible for a household to not belong to any population groups in a particular level. Specifically, because AIANNH areas do not cover the United States, a household located outside all designated AIANNH areas does not belong to any population groups in the (AIANNH, Detailed) level. For example, households in Arkansas do not belong to any AIANNH areas, and therefore would not contribute to any AIANNH area counts.  

A householder's characteristic iterations primarily determine the number of population groups the household belongs to in each level because the geographic entities in a geography level are disjoint (a household cannot be located in both Schuyler County, NY and Fairfax County, VA). A household associated with the maximum of eight race codes and a single ethnicity code could belong to at most nine detailed characteristic iterations (eight alone or in any combination race groups and one ethnic group) and, similarly, at most nine regional characteristic iterations. Thus, for any given population group level, the maximum number of population groups a household may contribute to is nine.

\subsection{Detailed Demographic and Housing Characteristics File B}
The Detailed DHC-B aims to tabulate statistics about household type and tenure by population groups. The following statistical tables are released for each eligible population group.\footnote{Population groups may be deemed ineligible to receive certain statistics for various reasons discussed throughout the document. For instance, Detailed DHC-B tables are not released for groups that do not appear in the Detailed DHC-A's T01001.}

\begin{itemize}
    \item Household type counts for eligible population groups. The household type tables come in four different variants (T03001, T03002, T03003, and T03004) as shown in Tables \ref{tab:t03001-shell}, \ref{tab:t03002-shell}, \ref{tab:t03003-shell}, and \ref{tab:t03004-shell}. For convenience in mathematical expressions, we use HT to refer generically to the household type table class.  
    \item Tenure counts for eligible population groups. The tenure tables come in two different variants (T04001 and T04002) as shown in Tables \ref{tab:t04001-shell} and \ref{tab:t04002-shell}. For convenience in mathematical expressions, we use the abbreviation T to refer generically to the tenure table class.  
\end{itemize}

Each table has a \emph{basis}, a set of fine-grained, disaggregated table cells from which the remaining cells may be generated. In the below table shells, the dark text indicates which cells form the table's basis, while the light text shows the aggregated table cells. For a given table, every household can be assigned to exactly one category of the table's basis. For example, with T03002, every household is exclusively either a family household or a nonfamily household. 

\begin{table}[H]
\begin{tabular}{l}
\textbf{T03001. Household Type (Universe) }\\
\textit{Universe: Households}\\
Total\\
\end{tabular}
\caption{\label{tab:t03001-shell} T03001 consists solely of a total household count.}
\end{table}

\begin{table}[H]
\begin{tabular}{l}
\textbf{T03002. Household Type (2 Categories)}\\
\textit{Universe: Households}\\
\textcolor{lightgray}{Total:}\\
\quad \quad Family Household\\
\quad \quad Nonfamily Household\\
\end{tabular}
\caption{\label{tab:t03002-shell} T03002 consists of counts for family household and nonfamily household.}
\end{table}

\begin{table}[H]
\begin{tabular}{l}
\textbf{T03003. Household Type (6 Categories)}\\
\textit{Universe: Households}\\
\textcolor{lightgray}{Total:}\\
\quad \quad \textcolor{lightgray}{Family Household:}\\
\quad \quad \quad \quad Married Couple Family\\
\quad \quad \quad \quad Other Family\\
\quad \quad \textcolor{lightgray}{Nonfamily Household:}\\
\quad \quad \quad \quad Householder Living Alone\\
\quad \quad \quad \quad Householder Not Living Alone\\
\end{tabular}
\caption{\label{tab:t03003-shell} T03003 contains counts for married couple households, other family households, and the breakdown of nonfamily households into categories for householders living alone or not living alone.}
\end{table}

\begin{table}[H]
\begin{tabular}{l}
\textbf{T03004. Household Type (8 Categories)}\\
\textit{Universe: Households}\\
\textcolor{lightgray}{Total:}\\
\quad \quad \textcolor{lightgray}{Family Household:}\\
\quad \quad \quad \quad Married Couple Family\\
\quad \quad \quad \quad \textcolor{lightgray}{Other Family:}\\
\quad \quad \quad \quad \quad \quad Male householder, no spouse present\\
\quad \quad \quad \quad \quad \quad Female householder, no spouse present\\
\quad \quad \textcolor{lightgray}{Nonfamily Household:}\\
\quad \quad \quad \quad Householder Living Alone\\
\quad \quad \quad \quad Householder Not Living Alone\\
\end{tabular}
\caption{\label{tab:t03004-shell} T03004 contains counts for married couple households, the breakdown of other family households into categories for male or female householders with no spouse present, and the breakdown of nonfamily households into categories for householders living alone or not living alone.}
\end{table}

\begin{table}[H]
\begin{tabular}{l}
\textbf{T04001. Tenure (Universe)}\\
\textit{Universe: Occupied Housing Units}\\
Total\\
\end{tabular}
\caption{\label{tab:t04001-shell} T04001 consists solely of a total household count.}
\end{table}

\begin{table}[H]
\begin{tabular}{l}
\textbf{T04002. Tenure (3 Categories)}\\
\textit{Universe: Occupied Housing Units}\\
\textcolor{lightgray}{Total:}\\
\quad \quad Owned with a mortgage or a loan\\
\quad \quad Owned free and clear\\
\quad \quad Renter Occupied\\
\end{tabular}
\caption{\label{tab:t04002-shell} T04002 consists of counts for three household tenure categories: owned with a mortgage or a loan, owned free and clear, and renter occupied.}
\end{table}

As previously mentioned, despite the difference in universe terminology (i.e., household vs occupied housing units), the household type tables and tenure tables both provide total counts over the same entities. That is, a population group's total household count (e.g., T03001) measures the same quantity as the population group's total occupied housing unit count (e.g., T04001).

An additional table pertaining to properties of persons is also needed as an input for the SafeTab-H algorithm. The T01001 table is produced as part of the Detailed DHC-A.

\begin{table}[H]
\begin{tabular}{l}
\textbf{T01001 Total Population}\\
\textit{Universe: Total Population}\\
Total\\
\end{tabular}
\caption{\label{tab:t1-shell} The T01001 table contains counts of persons, instead of households, and is iterated by detailed race and ethnicity of each individual in the population rather than of the householder. This table is an output of SafeTab-P and was released in the Detailed DHC-A.}
\end{table}

\subsection{Private Release Problem}
The Census Bureau release of statistical data products is regulated under Title 13, which disallows any data publications in which an individual's data can be identified \cite{title13}. Moreover, it has been demonstrated that legacy statistical disclosure limitation techniques are vulnerable to attacks that can reconstruct the sensitive person records from aggregate statistics \cite{ap-census-attack}. Hence, the Census Bureau decided to release many of the 2020 Census data products, including the \dhcre, using algorithms that satisfy modern privacy definitions like differential privacy \cite{dsep-dp}.

In particular, we describe a disclosure avoidance technique for the Detailed DHC-B that was designed to satisfy the following desiderata: 
\begin{itemize}
    \item \textit{Privacy:} The algorithm must satisfy a variant of differential privacy known as zero-concentrated differential privacy (zCDP) with respect to arbitrary changes of any household record's values. 
    \item \textit{Population Groups:} The algorithm must release statistics for a predefined set of race and ethnicity characteristic iterations and the following geography levels: Nation, State, County, Census Tract, Place, and AIANNH areas. The algorithm also produces statistics for equivalent geographic regions in Puerto Rico.  
    \item \textit{Adaptivity:} For each population group, the algorithm may adaptively choose which granularity of the household type and tenure tables to release. The selection of granularity depends on the T01001 total population counts of each population group. For instance, a population group with a few people may only receive a total household count (i.e., T03001 and T04001 tables), while a population group with many people may receive the full table shells (i.e., T03004 and T04002 tables). 
    \item \textit{Accuracy:} The algorithm must achieve pre-specified accuracy levels for population groups in terms of the margins of error (MOE) in output counts. Different population groups may have different MOEs specified (described later in the paper in Table~\ref{tab:moe-targets}). The MOE discussed in the paper captures error induced by disclosure avoidance alone and does not capture other sources of error such as under counts in the 2020 Census.
    \item \textit{Integrality:} The output statistics must be integers. 
    \item \textit{Minimal Consistency:} The algorithm is not required, in general, to ensure consistency. That is, different counts output by the system need not be consistent with each other (e.g., the number of households of a certain characteristic iteration in the United States need not equal the sum of the household counts for the same characteristic iteration across all states). We also note that no consistency is enforced with other data products, such as the DHC. However, some postprocessing of outputs is done to address specific demographic reasonableness concerns. These postprocessing steps are discussed in Section \ref{sec:postprocessing}.
\end{itemize}

In the rest of the paper, we describe the SafeTab-H differential privacy algorithm, discuss implementation and parameter tuning, and analyze bounds on the privacy loss achievable while satisfying the constraints mentioned above.   

\section{SafeTab-H Algorithm}
\label{sec:algorithm-description}
\safetabh is a privacy algorithm for releasing 2020 Census household type and tenure counts, iterated by detailed race and ethnicity characteristic iterations. This section covers a simplified abstraction of the SafeTab-H algorithm. Additional implementation details are discussed in Section \ref{sec:implementation}. In this section, we describe the algorithm as applied to the United States. Puerto Rico is discussed in Section \ref{sec:pr}. The algorithm acts on a private dataframe of household records derived from the 2020 Census. At a high level, the algorithm performs the following steps for each eligible population group:

\begin{enumerate}
    \item It selects a household type table variant (T03001, T03002, T03003, or T03004) and a tenure table variant (T04001 or T04002).
    \item It queries the household data to compute the basis of each selected table variant.
    \item It perturbs the computed basis of each selected table variant by adding noise drawn from a discrete Gaussian distribution.  
\end{enumerate}

The selection in the first step is fully determined by a comparison of each eligible population group's T01001 total population count to population thresholds pre-specified by the Census Bureau. The computation in the second step produces accurate counts according to the 2020 CEF. Finally, the noise infusion in the third step provides necessary privacy protection in accordance with Census Bureau policy.

For a more technical look at SafeTab-H, we begin with a description of the algorithm's input data sources and then present a pseudocode representation.

\subsection{Input Data Description}

\subsubsection{Household Data}
Household records are stored in the 2020 CEF, a relational database with multiple person and household attributes spread across several linked dataframes. Many of these attributes are irrelevant to the tabulations in the Detailed DHC-B. As such, we assume a simplified, reduced-form data representation that is sufficient for our purposes. This dataset is a private dataframe derived from the 2020 CEF that consists of a row for each household in the United States with the following attributes: BlockID, RaceEth, HouseholdType, and Tenure.

\vspace{\baselineskip}

\noindent \textbf{BlockID} is a single attribute that geolocates a household record to a unique Census block. As previously discussed, all geographic entities are aggregations of blocks. Thus, we assume that BlockID implicitly encodes each record's unique Census tract, county, and state. BlockID also encodes whether a record belongs to a place or AIANNH area and, if so, uniquely identifies its place or AIANNH area. We note that all records are vacuously included in the nation geographic entity.

\vspace{\baselineskip}

\noindent \textbf{RaceEth} is a single attribute that encodes up to eight race codes and an ethnicity code of the householder. That is, one household's RaceEth attribute may indicate the householder is Andorran and Dominican while another household's RaceEth attribute indicates a householder that is South African, Japanese, Tongan, and Not Hispanic or Latino. We assume this RaceEth conceptualization combined with Census Bureau specifications fully determines the characteristic iterations of a household.

\vspace{\baselineskip}

\noindent \textbf{HouseholdType} categorizes the relationship of the householder to other members of their household. There are five categories:
\begin{itemize}
    \item Married couple family.
    \item Other family, male householder with no spouse present.
    \item Other family, female householder with no spouse present.
    \item Nonfamily with the householder living alone.
    \item Nonfamily with the householder not living alone.
\end{itemize}

\vspace{\baselineskip}

\noindent \textbf{Tenure} categorizes the householder's owner or renter status. There are three categories:
\begin{itemize}
    \item Owned with a mortgage or a loan.
    \item Owned free and clear.
    \item Renter occupied.
\end{itemize}
\vspace{\baselineskip}

\subsubsection{Total Population Counts}
The Detailed DHC-A's T01001 table contains total population counts by detailed race and ethnicity.  These counts are used by SafeTab-H to determine the granularity of household type and tenure tabulations each population group should receive. Since the Detailed DHC-A was published before the Detailed DHC-B, for the purposes of SafeTab-H, we treat the T01001 counts as fixed exogenous inputs to the program. As with the household data, we assume a reduced-form dataframe representation. In this case, the dataframe consists of a row for each population group output by the SafeTab-P privacy algorithm. This dataset has the following attributes: PopGroup and Count.

\vspace{\baselineskip}

\noindent \textbf{PopGroup} is a single attribute that encodes the geographic entity and the characteristic iteration of a population group.

\vspace{\baselineskip}

\noindent \textbf{Count} is the population group's total population count from the Detailed DHC-A. 

\vspace{\baselineskip}

Equivalently, we can view the T01001 input data as a function that maps population groups to their corresponding total population counts. 

\vspace{\baselineskip}

\subsection{The Algorithm Description}

For reference, the notation used in this section and the algorithm pseudocode is summarized in Table~\ref{tab:algorithm-notation}. 

SafeTab-H produces tabulations for population groups. Population groups are split into sets called population group levels (specified by a geography level and an iteration level) with distinct privacy-loss budgets for each table class (household type and tenure). Records are associated with population groups via transformations that map their BlockID to geographic entities, and their RaceEth attribute to characteristic iterations. For the purposes of this section, we assume the following model for population groups:

\begin{itemize}
    \item SafeTab-H is given fixed T01001 population group counts. That is, there is a set of population groups $\mathcal{T}$ and a mapping $h: \mathcal{T} \rightarrow \mathbb{Z}$ of population groups to fixed counts of their total population. 
    
    \item SafeTab-H operates on population group levels $\mathcal{P}_1, \mathcal{P}_2, \ldots,\mathcal{P}_{\omega}$. For example, $\mathcal{P}_i$ may be the level (State, Detailed) consisting of population groups, such as (Iowa, Albanian alone) and (Kansas, German alone or in any combination).
    
    \item SafeTab-H accounts for the possibility that not all population groups in the input specification have corresponding T01001 counts. Population groups without T01001 counts do not receive household type or tenure results in the Detailed DHC-B. That is, SafeTab-H produces household type and tenure tables for each population group $P \in \mathcal{P}_i \cap \mathcal{T}$ for $1 \le i \le {\omega}$.
    
    \item In SafeTab-H, the household type (HT) and tenure (T) counts receive separate privacy-loss budgets for each population group level $\rho_1^{t}, \rho_2^{t}, \ldots, \rho_{\omega}^{t}$ with $\rho_i^{t}$ corresponding to the budget for population group level $\mathcal{P}_i$ for table class $t \in \{\text{HT, T}\}$. Privacy-loss budgets are described in greater detail in Section \ref{sec:privacy-prelim}. For now, we note that each $\rho_i^{t}$ is a positive real-valued number.
    
    \item For each $\mathcal{P}_i$, we assume we have a function $g_i: \mathcal{I} \rightarrow 2^{\mathcal{P}_i}$, where $\mathcal{I}$ is the domain of household records in the private dataframe. That is, $g_i$ maps a household record $r$ to the subset of population groups at the level $i$ to which it belongs (i.e., $g_i(r) \subset \mathcal{P}_i$). For example, suppose $i$ corresponds to the (State, Regional) level, the record $r$'s BlockID uniquely identifies its state as Idaho, and its RaceEth attribute encodes Nigerian, Beninese, Tongan, and Not Hispanic or Latino. Then $g_i(r)$ would associate the record with the following population groups: (Idaho, Sub-Saharan African alone or in any combination) and (Idaho, Polynesian alone or in any combination).
    
    \item We assume the stability of $g_i$, denoted by $\Delta(g_i)$, is known. The stability is defined as the maximum number of population groups a record could belong to in a level. Formally, $\Delta(g_i) = \max_{r \in \mathcal{I}} |g_i(r)|$ \cite{McSherry09}. Importantly, this value defines what could be the maximum based on any hypothetically possible record, rather than defining what is the maximum based on the collected 2020 Census data. In other words, the stability is a data-independent value. As described earlier, this value is $\Delta(g_i) = 9$ for all population group levels tabulated in the Detailed DHC-B.
\end{itemize}

The main algorithm is presented in Algorithm~\ref{alg:safetab-main-algorithm}.
This algorithm proceeds by looping over the population group levels.
For each population group level, we apply $g_i$ to the dataframe to map each record to the set of population groups it is associated with.
Then, for each population group in the level, we call the function \textsc{VectorizePopulationGroup}, passing in a dataframe containing just the records in that population group. However, we skip over population groups that do not have T01001 counts.

The pseudocode for the procedure \textsc{VectorizePopulationGroup} is given in Algorithm~\ref{alg:safetab-vectorize-pop-group}. This code performs the high-level steps 1 and 2 discussed earlier, returning two basis vectors (one for a household type table and another for a tenure table) for the population group. It starts by comparing the provided T01001 count against a set of given thresholds, denoted $\theta_1, \theta_2,$ and $\theta_3$ for the HT table class and $\psi_1$ for the T table class. Depending on which thresholds the published count exceeds, it selects an HT table variant (T03001, T03002, T03003, or T03004) and a T table variant (T04001 or T04002). It then computes the basis vectors for the selected household type and tenure tables by counting how many households in the population group belong to each component of the respective basis vectors.

The computed HT vectors of each population group in the population group level are stacked to create one large HT vector for the population group level. Similarly, one large T vector is constructed.  

The HT and T vectors for the population group level are then passed through the \textsc{VectorDiscreteGaussian} procedure using the level's privacy-loss budget for the respective table classes.

The pseudocode for the procedure \textsc{VectorDiscreteGaussian} is given in Algorithm~\ref{alg:base-discrete-gaussian}. This procedure carries out the high-level step 3 by adding independently drawn noise from the discrete Gaussian distribution to each of the input vector's components. The distribution is scaled according to the privacy-loss budget input to the procedure.

Next, we introduce background material on zCDP so that we can analyze the privacy guarantees of the SafeTab-H algorithm.
\begin{table}[t]
    \centering
    \begin{tabular}{c p{.8\linewidth}}
        \toprule
        \textbf{Notation} & \textbf{Description} \\
        \midrule
        $\omega$ & the number of population group levels \\
        $\mathcal{P}_i$ & population group level $i$  \\
        $\mathcal{T}$ & population groups with T1 counts \\
        $h$ & a mapping of population groups to their T1 counts \\
        $\rho_i^t$ & the privacy-loss budget allocated to population group level $i$ for table $t \in \{\text{HT, T}\}$ \\
        $g_i$ & a function mapping records to the set of population groups in $\mathcal{P}_i$ to which the record belongs \\
        $\Delta(g_i)$ & $\max_{r \in \mathcal{I}} |g_i(r)|$ \\
        $[a, b]$ & The range of integers starting with $a$ and ending with $b$, inclusive.
    \end{tabular}
    \caption{A summary of the notation used in Section~\ref{sec:algorithm-description}}
    \label{tab:algorithm-notation}
\end{table}

\begin{algorithm}[t]
\caption{\label{alg:safetab-main-algorithm} The main \safetabh algorithm.}
\begin{algorithmic}[1]
\Require
$df$: A private dataframe with attributes [BlockID, RaceEth, HouseholdType, Tenure] and one row for each household in the United States.
\Require $\{\rho_i^t\}_{i \in [1,\omega]}$: Privacy-loss parameters for each population-group level $i \in [1, \omega]$ and table class $t \in \{\text{HT, T}\}$.
\Require $(\mathcal{T}, h)$: A mapping $h$ of population groups to their T01001 total population counts, along with the mapping's domain $\mathcal{T}$. 

\Procedure{SafeTab-H}{$df$, $\{\rho_i^t\}$, $\mathcal{T}$, $h$}

\For{$i \in [1, \omega]$} \label{line:pop-group-level-loop}
\State $df_i \leftarrow df$.flatmap($g_i$);
\Comment{$df_i$ has schema [PopGroup, HouseholdType, Tenure]}
\State $s \leftarrow \Delta(g_i)$ 
\Comment{1 row in $df$ may result in $\leq s$ rows in $df_i$}
\State $V_{HT},V_{T} \gets [\ ]$ \Comment{Initialize empty vectors of counts}
\For{$P \in \mathcal{P}_i$ \label{line:iteration-loop}}
\If{$P$ $\notin$ $\mathcal{T}$}
\State \textbf{continue}
\Comment{Skip population groups that do not have T01001 counts}
\EndIf

\State $c \leftarrow h(P)$
\Comment{Get the T01001 count of population group $P$}

\State $df_P$ $\leftarrow$ $df_i$.filter(PopGroup $== P$)
\State $v_{HT},v_{T} \leftarrow$\Call{VectorizePopulationGroup}{$df_p$, $P$, $c$}
\State $V_{HT}$.append($v_{HT}$) 
\State $V_{T}$.append($v_{T}$)
\EndFor
\State \textbf{Output} \Call{VectorDiscreteGaussian}{$V_{HT}, \rho_i^{HT}/s$} \label{line:gaussian-t3}
\State \textbf{Output} \Call{VectorDiscreteGaussian}{$V_{T}, \rho_i^{T}/s$} \label{line:gaussian-t4}
\EndFor
\EndProcedure
\end{algorithmic}
\end{algorithm}

\begin{algorithm}[t]
\caption{\label{alg:safetab-vectorize-pop-group} Subroutine of \safetabh that returns a vector of HT counts and a vector of T counts for a single population group.}
\begin{algorithmic}[1]
\Require $df$: A private dataframe with attributes [PopGroup, HouseholdType, Tenure]. This dataframe should only contain household records in population group $P$.
\Require $P$: The population group.
\Require $c$: The T01001 count of population group $P$.

\Procedure{VectorizePopulationGroup}{$df, P, c$}
\State \textit{// Adaptive process for Household Type tabulations}\label{line:t3-start}
\State $v_{HT}, v_{T} \leftarrow [\ ]$
\If{$c > \theta_3$}
\State \textit{// Vectorize basis of T03004}
\State $v\leftarrow$ df.map(HouseholdType $\rightarrow$ T03004 basis).groupby(T03004 basis).count()
\State $v_{HT}$.append($v$) \label{line:t03004}

\ElsIf{$c > \theta_2$}
\State \textit{// Vectorize basis of T03003}
\State $v\leftarrow$ df.map(HouseholdType $\rightarrow$ T03003 basis).groupby(T03003 basis).count()
\State $v_{HT}$.append($v$) \label{line:t03003}

\ElsIf{$c > \theta_1$}

\State \textit{// Vectorize basis of T03002}
\State $v\leftarrow$ df.map(HouseholdType $\rightarrow$ T03002 basis).groupby(T03002 basis).count()
\State $v_{HT}$.append($v$)\label{line:t03002}

\Else
\State \textit{// Vectorize basis of T03001}
\State $v_{HT}$.append(df.count())\label{line:t03001} 
\EndIf\label{line:t3-end}

\State \textit{// Adaptive process for Tenure tabulations}\label{line:t4-start}

\If{$c > \psi_1$}

\State \textit{// Vectorize basis of T04002}
\State $v\leftarrow$ df.map(Tenure $\rightarrow$ T04002 basis).groupby(T04002 basis).count()
\State $v_{T}$.append($v$)\label{line:t04002}

\Else
\State \textit{// Vectorize basis of T04001}
\State $v_{T}$.append(df.count()) \label{line:t04001} 
\EndIf\label{line:t4-end}
\State \textbf{Return} $v_{HT}, v_{T}$
\EndProcedure
\end{algorithmic}
\end{algorithm}

\begin{algorithm}[t]
\caption{\label{alg:base-discrete-gaussian} The discrete Gaussian mechanism for vectors.}
\begin{algorithmic}[1]
\Require $a$: An $n$ dimensional vector of integers.
\Require $\rho$: A privacy-loss parameter.
\Procedure{VectorDiscreteGaussian}{$a, \rho$}
\State $y \gets \mathcal{N}^{n}_{\mathbb{Z}}\left(\frac{1}{2\rho}\right)$
\State \textbf{return} $a + y$
\EndProcedure
\end{algorithmic}
\end{algorithm}

\clearpage


\section{Privacy Preliminaries}
\label{sec:privacy-prelim}

In this section, we give necessary background on zCDP and the  privacy properties that it guarantees.

\subsection{Privacy Definitions}
\label{sec:privacy-defintions}

\begin{definition}[Neighboring Databases]
\label{def:neighboring-databases}
Let $x,x'$ be databases represented as multisets of tuples. We say that $x$ and $x'$ are \emph{neighbors} if their symmetric difference is 1.
\end{definition}

\begin{definition}[Bounded-Neighboring Databases]
\label{def:bounded-neighboring-databases}
Let $x,x'$ be databases represented as multisets of tuples. We say that $x$ and $x'$ are \emph{bounded neighbors} if they differ by \emph{arbitrarily changing} at most one tuple.
\end{definition}
We sometimes refer to neighboring databases as ``unbounded neighbors'' to differentiate between Definitions~\ref{def:neighboring-databases} and \ref{def:bounded-neighboring-databases}. However, it should be assumed we are referring to Definition~\ref{def:neighboring-databases} unless a distinction is explicitly made. \par

The mathematical definition of zCDP expresses a bound on the \emph{R\'enyi divergence} between the distributions of a mechanism run on neighboring databases.

\begin{definition}
The \emph{R\'enyi divergence of order $\alpha$} between distribution $P$ and distribution $Q$, denoted $D_{\alpha}(P \| Q)$, is defined as
\begin{equation}
    D_{\alpha}(P \| Q) = \frac{1}{\alpha-1}\log\left(\underset{x \sim P}{\mathbb{E}} \left[ \left( \frac{P(x)}{Q(x)} \right)^{\alpha-1}\right]\right)
\end{equation}
\end{definition}

\begin{definition}(zCDP \cite{BunS16})\label{def:zcdp}
An algorithm $M: \mathcal{X} \rightarrow \mathcal{Y}$ satisfies $\rho$-zCDP if for all neighboring $x, x' \in \mathcal{X}$ and for all $\alpha \in (1, \infty)$,
\begin{equation}
    D_{\alpha}(M(x) \| M(x')) \le \rho \alpha.
\end{equation}
\end{definition}
We also define bounded zCDP, which considers bounded-neighboring databases instead of unbounded-neighboring databases.
\begin{definition}(Bounded zCDP \cite{BunS16})\label{def:bounded-zcdp}
An algorithm $M: \mathcal{X} \rightarrow \mathcal{Y}$ satisfies $\rho$-zCDP if for all bounded neighbors $x, x' \in \mathcal{X}$ and for all $\alpha \in (1, \infty)$,
\begin{equation}
    D_{\alpha}(M(x) \| M(x')) \le \rho \alpha.
\end{equation}
\end{definition}

\subsection{Privacy Properties}

\subsubsection{Composition}
\label{sec:composition}

One of the most useful and important properties of privacy definitions is their behavior under composition. For the SafeTab-H privacy analysis, sequential composition for zCDP is sufficient.

\begin{lemma}(Adaptive sequential composition of zCDP \cite{BunS16})
\label{lem:sequential-composition-zcdp}
Let $M_1: \mathcal{X} \rightarrow \mathcal{Y}$ and $M_2: \mathcal{X} \times \mathcal{Y} \rightarrow \mathcal{Z}$ be mechanisms satisfying $\rho_1$-zCDP and $\rho_2$-zCDP respectively. Let $M_3(x) = M_2(x, M_1(x))$. Then $M_3$ satisfies $(\rho_1 + \rho_2)$-zCDP.
\end{lemma}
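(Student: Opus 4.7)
The plan is to reduce the claim to the \emph{chain rule} for R\'enyi divergence on a joint distribution, then invoke post-processing to pass back from the joint to $M_3$. First, I fix a pair of neighboring databases $x, x' \in \mathcal{X}$ and an order $\alpha \in (1,\infty)$, and I consider the joint output distributions $J_x$ and $J_{x'}$ of the pair $(M_1(x), M_2(x, M_1(x)))$ and $(M_1(x'), M_2(x', M_1(x')))$ respectively. Since $M_3(x)$ is the second marginal of $J_x$ (extracting the second coordinate is a deterministic post-processing), the data-processing inequality for R\'enyi divergence gives
\begin{equation}
D_\alpha(M_3(x) \,\|\, M_3(x')) \;\le\; D_\alpha(J_x \,\|\, J_{x'}).
\end{equation}
So it suffices to bound the divergence on the joint by $(\rho_1 + \rho_2)\alpha$.

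Next, I would factor the joint densities as a marginal times a conditional. Writing $P_1(y) = \Pr[M_1(x)=y]$, $Q_1(y) = \Pr[M_1(x')=y]$, and $P_2(z\mid y) = \Pr[M_2(x,y)=z]$, $Q_2(z\mid y) = \Pr[M_2(x',y)=z]$, the quantity $\exp\bigl((\alpha-1)\,D_\alpha(J_x \,\|\, J_{x'})\bigr)$ equals
\begin{equation}
\sum_{y,z} \bigl(P_1(y) P_2(z\mid y)\bigr)^{\alpha}\bigl(Q_1(y) Q_2(z\mid y)\bigr)^{1-\alpha}
= \sum_y P_1(y)^\alpha Q_1(y)^{1-\alpha} \sum_z P_2(z\mid y)^\alpha Q_2(z\mid y)^{1-\alpha}.
\end{equation}
The inner sum is exactly $\exp\bigl((\alpha-1)\,D_\alpha(M_2(x,y)\,\|\,M_2(x',y))\bigr)$. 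For every fixed $y$, the mechanism $x \mapsto M_2(x,y)$ is $\rho_2$-zCDP (this is the content of the adaptive hypothesis on $M_2$), so that inner factor is at most $\exp\bigl((\alpha-1)\rho_2 \alpha\bigr)$ --- crucially, a bound independent of $y$. Factoring it out leaves $\sum_y P_1(y)^\alpha Q_1(y)^{1-\alpha} = \exp\bigl((\alpha-1)\,D_\alpha(M_1(x)\,\|\,M_1(x'))\bigr) \le \exp\bigl((\alpha-1)\rho_1\alpha\bigr)$ by the $\rho_1$-zCDP of $M_1$.

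Multiplying, taking logarithms, and dividing by $\alpha-1$ gives $D_\alpha(J_x \,\|\, J_{x'}) \le (\rho_1 + \rho_2)\alpha$, and combining with the post-processing inequality yields $D_\alpha(M_3(x) \,\|\, M_3(x')) \le (\rho_1 + \rho_2)\alpha$ for all $\alpha > 1$ and all neighboring $x, x'$, which is precisely $(\rho_1+\rho_2)$-zCDP. The main conceptual step --- and the only non-routine one --- is the uniform-in-$y$ bound on the conditional R\'enyi divergence: because the zCDP guarantee of $M_2(\cdot,y)$ holds with the \emph{same} $\rho_2$ for every $y$, the supremum over $y$ collapses harmlessly and the two exponents simply add. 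Nothing else in the argument is more than algebraic manipulation of the R\'enyi divergence.
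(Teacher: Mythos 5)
The paper does not prove this lemma itself --- it is stated as imported background and cited to Bun and Steinke --- so there is no in-paper argument to diverge from; your proof is correct and is essentially the canonical one from that reference: factor the joint law of $(M_1(x), M_2(x,M_1(x)))$ into marginal times conditional, bound the conditional R\'enyi factor uniformly in $y$ using the per-$y$ $\rho_2$-zCDP guarantee, bound the marginal factor by $\rho_1$-zCDP, and recover $M_3$ from the joint via the data-processing inequality for R\'enyi divergence. All steps check out, including the sign of $\alpha-1$ when dividing and the correct reading of the adaptive hypothesis as ``$x \mapsto M_2(x,y)$ is $\rho_2$-zCDP for every fixed $y$.''
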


\subsubsection{Postprocessing}
\label{sec:post-processing-background}

Zero-concentrated differential privacy is closed under postprocessing, meaning that the privacy guarantee cannot be weakened by manipulating the outputs of a zCDP mechanism without reference to the protected inputs.

\begin{lemma}(Postprocessing for zCDP \cite{BunS16})
\label{lem:post-processing}
Let $M: \mathcal{X} \rightarrow \mathcal{Y}$ and $f:\mathcal{Y} \rightarrow \mathcal{Z}$ be randomized algorithms. Suppose $M$ satisfies $\rho$-zCDP. Then $f \circ M: \mathcal{X} \rightarrow \mathcal{Z}$ satisfies $\rho$-zCDP.
\end{lemma}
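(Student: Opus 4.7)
The proof plan is to reduce the claim to the data processing inequality for Rényi divergence. Fix arbitrary neighbors $x, x' \in \mathcal{X}$ and an order $\alpha \in (1, \infty)$. Let $P$ and $Q$ denote the distributions of $M(x)$ and $M(x')$, respectively, so that by the $\rho$-zCDP hypothesis on $M$ we have $D_\alpha(P \| Q) \le \rho\alpha$. Writing $f_*P$ and $f_*Q$ for the pushforward distributions of $f(Y)$ when $Y \sim P$ and $Y \sim Q$ (these are exactly the distributions of $(f \circ M)(x)$ and $(f \circ M)(x')$), it suffices to show
\begin{equation}
D_\alpha(f_*P \| f_*Q) \le D_\alpha(P \| Q),
\end{equation}
since then $D_\alpha(f_*P \| f_*Q) \le \rho\alpha$, and since $x, x', \alpha$ were arbitrary the zCDP guarantee transfers to $f \circ M$.

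To prove the data processing inequality, I would handle the deterministic case first. If $f$ is deterministic, partition $\mathcal{Y}$ by the level sets $f^{-1}(z)$ for $z \in \mathcal{Z}$. Expanding the definition of Rényi divergence and using Jensen's inequality for the convex function $t \mapsto t^{\alpha}$ (valid for $\alpha > 1$; the $0<\alpha<1$ case, which is concave, is not needed here) applied fiberwise to the conditional distribution of $Y$ given $f(Y)=z$ under $Q$ yields
\begin{equation}
\underset{Z \sim f_*Q}{\mathbb{E}}\!\left[\left(\tfrac{f_*P(Z)}{f_*Q(Z)}\right)^{\alpha}\right] \le \underset{Y \sim Q}{\mathbb{E}}\!\left[\left(\tfrac{P(Y)}{Q(Y)}\right)^{\alpha}\right],
\end{equation}
which after taking $\tfrac{1}{\alpha-1}\log(\cdot)$ gives the desired inequality. (A minor bookkeeping issue: if $Q$ is not absolutely continuous with respect to $P$ or vice versa, the standard conventions $0/0 = 0$ and $D_\alpha = \infty$ otherwise make the inequality trivially hold.)

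For the randomized case, I would express $f$ as a deterministic map $\tilde f : \mathcal{Y} \times \mathcal{R} \to \mathcal{Z}$, where $R \sim \mu$ is independent randomness drawn from a distribution on $\mathcal{R}$ that does not depend on the input. Then $f_*P$ is the pushforward of $P \otimes \mu$ under $\tilde f$, and similarly for $f_*Q$. Independence of $R$ from the database gives the tensorization identity $D_\alpha(P \otimes \mu \| Q \otimes \mu) = D_\alpha(P \| Q)$, and applying the deterministic data processing inequality to $\tilde f$ on the product space yields $D_\alpha(f_*P \| f_*Q) \le D_\alpha(P \otimes \mu \| Q \otimes \mu) = D_\alpha(P \| Q)$, completing the reduction.

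The main obstacle is the deterministic data processing inequality itself, since the rest of the argument is essentially unpacking definitions. The Jensen's-inequality step must be set up carefully so that the convex combination being bounded is exactly the conditional expectation of the likelihood ratio given the output $f(Y)$; once that coupling is in place, the inequality drops out in a single line. Because this is a standard consequence of $\rho$-zCDP established in \cite{BunS16}, I would cite the underlying Rényi data processing inequality rather than reprove it in the body of the paper.
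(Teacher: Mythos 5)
Your proof is correct: the reduction to the R\'enyi data processing inequality, the fiberwise Jensen argument with weights $Q(y)/f_*Q(z)$ over each level set $f^{-1}(z)$, and the handling of randomized $f$ via independent auxiliary randomness and the identity $D_\alpha(P\otimes\mu\,\|\,Q\otimes\mu)=D_\alpha(P\,\|\,Q)$ are all sound. The paper does not prove this lemma at all --- it simply cites Bun and Steinke --- and your argument is essentially the standard proof from that reference, so there is nothing to reconcile.
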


\subsection{Base Mechanism}
\label{sec:base-mechansisms}

\begin{definition}[L2 Sensitivity]
\label{def:sensitivity}
Given a vector function $q: \mathcal{X} \rightarrow \mathbb{Z}^n$, the sensitivity of $q$ is \linebreak $\sup_{x, x'} \|q(x)-q(x')\|_2$ where $x$ and $x'$ are neighboring databases and $\| \cdot \|_2$ is the Euclidean norm.
\end{definition}

There is an equivalent notion of bounded sensitivity.

\begin{definition}[Bounded L2 Sensitivity]
\label{def:bounded_sensitivity}
Given a vector function $q: \mathcal{X} \rightarrow \mathbb{Z}^n$, the sensitivity of $q$ is $\sup_{x ,x'} \|q(x)-q(x')\|_2$ where $x$ and $x'$ are bounded-neighboring databases and $\| \cdot \|_2$ is the Euclidean norm.
\end{definition}

\begin{definition}
\label{def:discrete-gaussian-distribution}
The discrete Gaussian distribution $\mathcal{N}_{\mathbb{Z}}(\sigma^2)$ centered at 0 is
\begin{equation}
\forall x \in \mathbb{Z}, \quad \Pr[X=x] = \frac{e^{-x^2/2\sigma^2}}{\sum_{y \in \mathbb{Z}}e^{-y^2/2 \sigma^2}}.
\end{equation}
\end{definition}

\begin{lemma}{\cite{CanonneK2020}}
\label{lem:discrete-gaussian-satisfies-zcdp}
Let $q: \mathcal{X} \rightarrow \mathbb{Z}^n$ with sensitivity $\Delta$. 
Then outputting \textsc{VectorDiscreteGaussian}$(q(x), \rho)$ from Algorithm~\ref{alg:base-discrete-gaussian} satisfies $\Delta^2\rho$-zCDP. 
\end{lemma}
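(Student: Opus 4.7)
The plan is to invoke the core privacy bound for the discrete Gaussian distribution established in \cite{CanonneK2020} and then perform the arithmetic that converts the abstract noise-scale parameter into the specific $\rho$-calibration used by Algorithm~\ref{alg:base-discrete-gaussian}. First I would fix arbitrary neighboring databases $x, x'$ and denote the sensitivity-bounded shift by $v = q(x) - q(x') \in \mathbb{Z}^n$, noting that $\|v\|_2 \le \Delta$ by hypothesis. Because the output of \textsc{VectorDiscreteGaussian}$(q(x), \rho)$ is simply $q(x) + Y$ with $Y \sim \mathcal{N}^n_{\mathbb{Z}}(1/(2\rho))$ drawn coordinatewise, the R\'enyi divergence $D_\alpha(M(x) \| M(x'))$ equals the R\'enyi divergence between two location-shifted product distributions, which depends only on $v$ and on the noise scale $\sigma^2 = 1/(2\rho)$.

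Next I would apply the R\'enyi-divergence bound for the product discrete Gaussian mechanism from \cite{CanonneK2020}, which states that for any integer vector $v \in \mathbb{Z}^n$ and any $\alpha > 1$,
\begin{equation}
D_\alpha\!\left( \mathcal{N}^n_{\mathbb{Z}}(\sigma^2) + v \,\big\|\, \mathcal{N}^n_{\mathbb{Z}}(\sigma^2) \right) \le \frac{\alpha \, \|v\|_2^2}{2 \sigma^2}.
\end{equation}
Substituting $\sigma^2 = 1/(2\rho)$ and $\|v\|_2^2 \le \Delta^2$ yields $D_\alpha(M(x) \| M(x')) \le \alpha \Delta^2 \rho$, which matches Definition~\ref{def:zcdp} with parameter $\Delta^2 \rho$. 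Since $x, x'$ were arbitrary neighbors, this establishes the claim.

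The main obstacle is not conceptual but rather a matter of citing the right form of the bound: Canonne, Kamath, and Steinke typically prove the one-dimensional inequality first and then extend to $\mathbb{Z}^n$ via tensorization of R\'enyi divergence. I would briefly justify the multi-dimensional form by noting that R\'enyi divergence is additive across independent product measures, so summing the per-coordinate bounds gives $\alpha \sum_{i=1}^n v_i^2 / (2\sigma^2) = \alpha \|v\|_2^2 / (2\sigma^2)$. Beyond this bookkeeping, no additional calculation is required, since the integrality of $q(x)$ and $q(x')$ guarantees $v \in \mathbb{Z}^n$ so that the shifted discrete Gaussian density is supported on the same lattice and the ratio in the definition of R\'enyi divergence is well-defined pointwise.
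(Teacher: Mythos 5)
Your argument is correct and is exactly the proof from Canonne–Kamath–Steinke that the paper itself defers to by citation (the paper states Lemma~\ref{lem:discrete-gaussian-satisfies-zcdp} without proof): the per-coordinate R\'enyi bound $D_\alpha \le \alpha v_i^2/(2\sigma^2)$ for integer shifts, tensorized by additivity of R\'enyi divergence over independent coordinates, then the substitution $\sigma^2 = 1/(2\rho)$ and $\|v\|_2^2 \le \Delta^2$ giving $D_\alpha \le \alpha\Delta^2\rho$ as required by Definition~\ref{def:zcdp}. No gaps.
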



\section{SafeTab-H Privacy Analysis}
\label{sec:safetab-discrete-gaussian-privacy}

In this section, we show that the SafeTab-H algorithm presented in Section~\ref{sec:algorithm-description} satisfies zero-concentrated differential privacy (zCDP). We then demonstrate that \safetabh also satisfies bounded zCDP with a privacy loss that increases by a factor of 2.

\begin{theorem}
\label{thm:safetab-satisfies-zcdp}
Let $\rho_{total} = \sum_{i=1}^{\omega} \rho_i^{HT} + \rho_i^{T}$. Algorithm~\ref{alg:safetab-main-algorithm} satisfies $\rho_{total}$-zCDP with respect to its inputs.
\end{theorem}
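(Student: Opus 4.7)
The plan is to reduce Theorem~\ref{thm:safetab-satisfies-zcdp} to Lemma~\ref{lem:discrete-gaussian-satisfies-zcdp} (discrete Gaussian sensitivity bound), Lemma~\ref{lem:sequential-composition-zcdp} (adaptive composition), and Lemma~\ref{lem:post-processing} (postprocessing), treating the T01001 inputs $(\mathcal{T}, h)$ as public, data-independent constants. The core of the argument is an L2-sensitivity calculation for each of the $2\omega$ vector releases on lines~\ref{line:gaussian-t3} and~\ref{line:gaussian-t4}, after which everything else is routine.

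First, I would fix a population group level $i$, let $s = \Delta(g_i)$, and analyze a single iteration of the outer loop. The variant-selection done inside \textsc{VectorizePopulationGroup} branches only on $c = h(P)$, which is a fixed public input; thus for every population group $P \in \mathcal{P}_i \cap \mathcal{T}$ the shape and interpretation of the basis vectors $v_{HT}(P)$ and $v_T(P)$ are determined \emph{before} touching $df$. In particular, the mapping $df \mapsto V_{HT}$ (and likewise $df \mapsto V_T$) is a deterministic vector query $q_i^{HT} : \mathcal{X} \to \mathbb{Z}^{n_i^{HT}}$ whose coordinates are counts indexed by pairs (population group in $\mathcal{P}_i \cap \mathcal{T}$, basis cell of its selected variant).

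The key step is to bound the L2 sensitivity of $q_i^{HT}$. Let $x,x'$ be unbounded neighbors, and without loss of generality assume $x' = x \cup \{r\}$ for a single record $r$. By definition of $\Delta(g_i)$, the record $r$ belongs to at most $s$ population groups in $\mathcal{P}_i$. For each such population group $P$, the chosen HT variant's basis is a \emph{partition} of households in $P$ (every household falls in exactly one of {Total}, or {Family, Nonfamily}, or the four- or five-cell partitions of T03003/T03004), so $r$ increments exactly one coordinate of $v_{HT}(P)$ by $1$. Therefore $q_i^{HT}(x') - q_i^{HT}(x)$ is a $\{0,1\}$-valued vector with at most $s$ nonzero entries, giving $\|q_i^{HT}(x') - q_i^{HT}(x)\|_2 \le \sqrt{s}$. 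The same bound holds for $q_i^T$ since the T03001/T04001 cases contribute a single unit and T04002 is again a partition. Applying Lemma~\ref{lem:discrete-gaussian-satisfies-zcdp} with sensitivity $\sqrt{s}$ and privacy parameter $\rho_i^{HT}/s$ shows the release on line~\ref{line:gaussian-t3} satisfies $(\sqrt{s})^2 \cdot (\rho_i^{HT}/s) = \rho_i^{HT}$-zCDP, and analogously line~\ref{line:gaussian-t4} satisfies $\rho_i^T$-zCDP.

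Finally, the entire algorithm is the adaptive sequential composition of $2\omega$ such mechanisms (plus the deterministic flatmap and variant selection, which are postprocessing of the data and of the public inputs $(\mathcal{T}, h)$). By Lemma~\ref{lem:post-processing} the deterministic pre/postprocessing steps are free, and by iteratively applying Lemma~\ref{lem:sequential-composition-zcdp} over the $2\omega$ Gaussian releases the total privacy loss is $\sum_{i=1}^{\omega}\bigl(\rho_i^{HT}+\rho_i^T\bigr) = \rho_{total}$, as claimed. The only subtle point, and the step I expect to write most carefully, is the sensitivity calculation: one must note both (i) that the branching on $c=h(P)$ is data-independent so the query dimension $n_i^{HT}$ is fixed, and (ii) that each basis is a partition, so a single record contributes at most one unit per population group it joins, yielding the $\sqrt{s}$ bound rather than a naive $s$ bound.
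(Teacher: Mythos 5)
Your proposal is correct and follows essentially the same route as the paper's proof: a per-population-group sensitivity-1 argument (each basis partitions the households, and the branch on the public count $c=h(P)$ is data-independent), amplified to an $\ell_2$ bound of $\sqrt{s}$ via the flatmap stability, then Lemma~\ref{lem:discrete-gaussian-satisfies-zcdp} with parameter $\rho_i^t/s$ and sequential composition over the $2\omega$ releases. The only cosmetic difference is that you phrase the per-group step as a partition/query-shape argument while the paper states it as two explicit subclaims about $v_{HT}$ and $v_T$; the content is identical.
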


\begin{proof}
The proof of Theorem~\ref{thm:safetab-satisfies-zcdp} follows via the combination of sensitivity analysis along with the fact that the base mechanism, \textsc{VectorDiscreteGaussian}, satisfies zCDP.

We first argue the transformation \textproc{VectorizePopulationGroup} in Algorithm~\ref{alg:safetab-vectorize-pop-group} has sensitivity $1$ with respect to each of the outputs $v_{HT},v_{T}$.

We decompose our argument into two  subclaims.

\textbf{Subclaim 1:} The vector, $v_{HT}$, produced in Lines \ref{line:t3-start}-\ref{line:t3-end} of \textproc{VectorizePopulationGroup} can have at most one of its elements increased by $1$ due to the addition of a record or decreased by $1$ due to the removal of a record.

\textbf{Proof of subclaim 1:} Assume the population group $P$ and its corresponding T01001 count $c$ are fixed.

If $c > \theta_1$, there are 3 similar cases. We show the case where $c > \theta_3$ and omit the others. 

In \textproc{VectorizePopulationGroup}, each record in the input dataframe for population group $P$ maps to exactly one group of the T03004 basis groupby. Since $v_{HT}$ is a vector of the counts of each group, it follows that the addition of a single record can increase one element of $v_{HT}$ by at most $1$ or, alternatively, the removal of a single record can decrease one element of $v_{HT}$ by at most $1$.

If $c \leq \theta_1$, the vector $v_{HT}$ is a size $1$ vector containing only the total count. It follows that the addition of a single record can increase the one element of $v_{HT}$ by at most $1$. Alternatively, the removal of a single record can decrease the one element of $v_{HT}$ by at most $1$.

This proves subclaim 1.

\textbf{Subclaim 2:} The vector, $v_{T}$, produced in Lines \ref{line:t4-start}-\ref{line:t4-end} of \textproc{VectorizePopulationGroup} can have at most one of its elements increased by $1$ due to the addition of a record or decreased by $1$ due to the removal of a record. We omit the proof, as it is nearly identical to that of subclaim 1.

Next, we claim that the $i$th loop of the \textbf{for} loop on line~\ref{line:pop-group-level-loop} of Algorithm~\ref{alg:safetab-main-algorithm} satisfies ($\rho_i^{HT}+\rho_i^{T}$)-zCDP. By the definition of $s$, any particular record can appear in the input ($df_P$) in at most $s$ calls to \textsc{VectorizePopulationGroup}. The vectors $V_{HT},V_{T}$ are the combination of each of the vectors produced by \textsc{VectorizePopulationGroup}. Since a single record can appear in $s$ calls to \textsc{VectorizePopulationGroup} and each call can change a single element by at most $1$, the addition or removal of a single record can change at most $s$ elements in $V_{HT}, V_{T}$ by at most $1$. Therefore $\|V_{HT}-V'_{HT}\|_2 \leq \sqrt{s}$, where $V_{HT}$ and $V_{HT}'$ are data vectors resulting from neighboring databases. The same holds for the vector $V_{T}$. Thus, by Lemma~\ref{lem:discrete-gaussian-satisfies-zcdp}, the call to \textsc{VectorDiscreteGaussian} with privacy-loss parameter $\rho_i^{HT}/s$ on Line~\ref{line:gaussian-t3} satisfies $\rho_i^{HT}$-zCDP. Similarly, the call on Line~\ref{line:gaussian-t4} with privacy-loss parameter $\rho_i^{T}/s$ satisfies $\rho_i^{T}$-zCDP. By sequential composition (Lemma~\ref{lem:sequential-composition-zcdp}), the combination satisfies ($\rho_i^{HT}+\rho_i^{T}$)-zCDP.

Finally, the overall algorithm satisfies $(\sum_{i=1}^{\omega} \rho_i^{HT} + \rho_i^{T})$-zCDP by Lemma~\ref{lem:sequential-composition-zcdp}.
\end{proof}

\subsection{Converting to Bounded zCDP}
\label{sec:bounded-dp}

\begin{theorem}
\label{thm:safetab-satisfies-bounded-zcdp}
Let $\rho_{total} = \sum_{i=1}^{\omega} \rho_i^{HT} + \rho_i^{T}$. Algorithm~\ref{alg:safetab-main-algorithm} satisfies bounded $2\rho_{total}$-zCDP with respect to its inputs.
\end{theorem}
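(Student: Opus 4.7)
The plan is to reuse the structure of the proof of Theorem~\ref{thm:safetab-satisfies-zcdp} almost verbatim, changing only the sensitivity analysis to account for bounded neighbors. The argument decomposes into (i) a per-call sensitivity bound for \textproc{VectorizePopulationGroup}, (ii) a level-wide sensitivity bound for $V_{HT}$ and $V_T$, and (iii) assembly via Lemma~\ref{lem:discrete-gaussian-satisfies-zcdp} and sequential composition. Note that the T01001 counts $h$ enter as fixed exogenous inputs, so the adaptive branching in \textproc{VectorizePopulationGroup} is deterministic with respect to the protected household dataframe, and the stability $\Delta(g_i) = s$ is data-independent; neither of these depends on whether neighbors are bounded or unbounded.

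First I would reformulate Subclaims~1 and~2 of Theorem~\ref{thm:safetab-satisfies-zcdp} under Definition~\ref{def:bounded-neighboring-databases}. A bounded change replaces one tuple $r$ by another tuple $r'$, which from the point of view of a single \textproc{VectorizePopulationGroup} call is equivalent to one deletion followed by one insertion. By Subclaim~1 (as already established under unbounded neighbors), the deletion decreases at most one coordinate of $v_{HT}$ by $1$, and the insertion increases at most one coordinate by $1$. These two coordinates need not coincide, so $\|v_{HT} - v'_{HT}\|_2 \le \sqrt{1^2 + 1^2} = \sqrt{2}$. The identical bound applies to $v_T$.

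Next I would lift this to the level-wide vectors. In iteration $i$, the old tuple $r$ belongs to at most $s = \Delta(g_i)$ population groups, and the new tuple $r'$ likewise belongs to at most $s$ population groups; thus, the concatenated vector $V_{HT}$ can have at most $s$ coordinates decreased by $1$ (one per group containing $r$) and at most $s$ coordinates increased by $1$ (one per group containing $r'$), giving $\|V_{HT} - V'_{HT}\|_2 \le \sqrt{2s}$. The same bound holds for $V_T$. Applying Lemma~\ref{lem:discrete-gaussian-satisfies-zcdp} with bounded L2 sensitivity $\Delta = \sqrt{2s}$ and noise parameter $\rho_i^{HT}/s$ yields $\Delta^2 \cdot \rho_i^{HT}/s = 2\rho_i^{HT}$-bounded zCDP for Line~\ref{line:gaussian-t3}, and analogously $2\rho_i^T$-bounded zCDP for Line~\ref{line:gaussian-t4}. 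Sequential composition (Lemma~\ref{lem:sequential-composition-zcdp}) across the two table classes within each level, and across all $\omega$ levels, then yields the claimed $2\rho_{total}$-bounded zCDP guarantee.

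The only subtle point is verifying that the factor is exactly $2$ rather than the $4$ one would obtain by a crude group-privacy appeal (treating a bounded change as two unbounded changes, which squares the per-step sensitivity factor). The sharper bound comes from combining the two unbounded steps inside the L2 norm before squaring: two unit perturbations at distinct coordinates contribute $\sqrt{2}$ to the L2 norm rather than $2$. This is the essential reason bounded zCDP costs only a factor of $2$ here. Everything else (the stability argument, the adaptive branching on $c$, the two-vector release structure) transfers without change from the proof of Theorem~\ref{thm:safetab-satisfies-zcdp}.
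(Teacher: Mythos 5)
Your proposal is correct and follows essentially the same route as the paper's proof: the same Subclaims 1 and 2 for \textproc{VectorizePopulationGroup}, the same $\sqrt{2s}$ bound on $\|V_{HT}-V'_{HT}\|_2$ for bounded neighbors (at most $s$ coordinates decreased and $s$ increased by $1$), and the same assembly via Lemma~\ref{lem:discrete-gaussian-satisfies-zcdp} and sequential composition. Your closing remark on why the factor is $2$ rather than $4$ is a nice clarification but does not change the argument.
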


\begin{proof}
The proof of Theorem~\ref{thm:safetab-satisfies-bounded-zcdp} follows via the combination of sensitivity analysis along with the fact that the base mechanism, \textsc{VectorDiscreteGaussian}, satisfies zCDP.  Under bounded zCDP, it is sufficient to bound the changes due to the removal of one record \textit{and} addition of another. As a result, we reason specifically about the effects of adding and removing a record separately and the effects when both happen.

We first argue the transformation \textproc{VectorizePopulationGroup} in Algorithm~\ref{alg:safetab-vectorize-pop-group} has  sensitivity $1$ with respect to each of the outputs $v_{HT},v_{T}$. 

We decompose our argument into two  subclaims. Namely, 

\textbf{Subclaim 1:} The vector, $v_{HT}$, produced in Lines \ref{line:t3-start}-\ref{line:t3-end} of \textproc{VectorizePopulationGroup} can have at most one of its elements increased by $1$ due to the addition of a record or decreased by $1$ due to the removal of a record.

\textbf{Proof of subclaim 1:} Assume the population group $P$ and its corresponding T01001 count $c$ are fixed.
If $c > \theta_1$, there are 3 similar cases. We show the case where $c > \theta_3$ and omit the others. 

In \textproc{VectorizePopulationGroup}, each record in the input dataframe for population group $P$ maps to exactly one group of the T03004 basis groupby. Since $v_{HT}$ is a vector of the counts of each group, it follows that the addition of a single record can increase one element of $v_{HT}$ by at most $1$ or, alternatively, the removal of a single record can decrease one element of $v_{HT}$ by at most $1$.

If $c \leq \theta_1$, the vector $v_{HT}$ is a size $1$ vector containing only the total count. It follows that the addition of a single record can increase the one element of $v_{HT}$ by at most $1$. Alternatively, the removal of a single record can decrease the one element of $v_{HT}$ by at most $1$.

This proves subclaim 1.

\textbf{Subclaim 2:} The vector, $v_{T}$, produced in Lines \ref{line:t4-start}-\ref{line:t4-end} of \textproc{VectorizePopulationGroup} can have at most one of its elements increased by $1$ due to the addition of a record or decreased by $1$ due to the removal of a record. We omit the proof, as it is nearly identical to that of Subclaim 1.

Next, we claim that the $i$th loop of the \textbf{for} loop on line~\ref{line:pop-group-level-loop} of Algorithm~\ref{alg:safetab-main-algorithm} satisfies bounded ($2\rho_i^{HT}+2\rho_i^{T}$)-zCDP. By the definition of $s$, any particular record can appear in the input ($df_P$) in at most $s$ calls to \textsc{VectorizePopulationGroup}. The vectors $V_{HT},V_{T}$ are the combination of each of the vectors produced by \textsc{VectorizePopulationGroup}. Since a single record can appear in $s$ calls to \textsc{VectorizePopulationGroup} and each call can change a single element by at most $1$, the addition or removal of a single record can change at most $s$ elements in $V_{HT}, V_{T}$ by at most $1$. Since the addition of one record can increase at most $s$ elements by $1$ and the removal of one record can decrease at most $s$ elements by $1$, we have $\|V_{HT}-V'_{HT}\|_2 \leq \sqrt{2s}$, where $V_{HT}$ and $V_{HT}'$ are data vectors resulting from bounded-neighboring databases. The same holds for the vector $V_{T}$. Thus, by Lemma~\ref{lem:discrete-gaussian-satisfies-zcdp}, the call to \textsc{VectorDiscreteGaussian} with privacy-loss parameter $\rho_i^{HT}/s$, on Line~\ref{line:gaussian-t3} satisfies bounded $2\rho_i^{HT}$-zCDP. Similarly, the call on Line~\ref{line:gaussian-t4} with privacy-loss parameter $\rho_i^{T}/s$ satisfies bounded $2\rho_i^{T}$-zCDP. By sequential composition (Lemma~\ref{lem:sequential-composition-zcdp}), the combination satisfies bounded ($2\rho_i^{HT}+2\rho_i^{T}$)-zCDP.

Finally, the overall algorithm satisfies bounded $(\sum_{i=1}^{\omega} 2\rho_i^{HT} + 2\rho_i^{T})$-zCDP by Lemma~\ref{lem:sequential-composition-zcdp}.
\end{proof}

\section{Implementation of SafeTab-H}
\label{sec:implementation}

The algorithm presented in Section \ref{sec:algorithm-description} is a simplified version of the implemented SafeTab-H program. In this section, we describe some of the differences between the implementation and the simplified algorithm. We focus on differences that could affect the privacy calculus and describe why the implementation is equivalent to the simplified algorithm.

\subsection{Input Validation}\label{sec:validation}
Input validation is an important step before deploying a differentially private algorithm. SafeTab-H performs extensive validation of its input data to ensure that provided data are in the expected formats and are internally consistent. There is not an impact to any privacy guarantees when input validation is done for the public datasets, like the list of all geographic entities for which data is to be tabulated. The input validation on private datasets that contain the data of individual census respondents has privacy implications because validation failures can reveal information about the dataset. However, validation failures are only visible to the trusted curator running the program, and on failure, no part of the differentially private program is run. Validation failures are made available to the trusted curator, so they can correct any errors in the provided input files before executing the differentially private program. Failures in validation are not released publicly and therefore, do not contribute to any privacy loss. 

\subsection{Tumult Analytics}\label{sec:analytics}
Rather than directly calculating stability and sampling from noise distributions, SafeTab-H is implemented using Tumult Analytics\cite{berghel2022tumult}, a framework for implementing differentially private queries. 

A key benefit of using Tumult Analytics is that all access to the sensitive data is mediated through a Tumult Analytics \texttt{Session}. The \texttt{Session} tracks all the transformations and measurements performed on the sensitive data and is able to correctly compute the total privacy loss of the computation on the sensitive data. In SafeTab-H, we construct an Analytics \texttt{Session} with:
\begin{itemize}
    \item The total privacy-loss budget for the pipeline (calculated as the sum of all budgets $\rho_{i}$).
    \item The private dataset of household records.
    \item The public datasets (information on all detailed race and ethnicity groups, characteristic iterations, geographic entities, and total population counts for each population group).
    \item The neighboring definition (privacy is with respect to the addition/removal of one record from the private dataset). Note that Tumult Analytics does not support bounded neighbors.
    \item The privacy definition to be satisfied (e.g., zCDP).
\end{itemize}
We then implement all data transformations (like mapping a record to its characteristic iterations) and queries within the framework. Tumult Analytics tracks the stability throughout the transformations and applies an appropriate amount of noise to the final queries, guaranteeing that the outputs are differentially private and no more than the total budget is expended.

The use of Tumult Analytics allows (and necessitates) some other deviations from the simplified algorithm. Rather than vectorizing each population group sequentially in a for-loop, we use Analytics' \texttt{groupby} feature to tabulate many population groups at once. Analytics requires users of the \texttt{groupby} feature to specify all groups to tabulate in advance in the form of a \texttt{KeySet} object. In the case of SafeTab-H, we construct \texttt{KeySets} containing the combinations of possible geographic areas, characteristic iterations, and either household types or tenure categories. We build these \texttt{KeySets} using the T01001 total population counts output by SafeTab-P (rather than relying on observed groups present in the 2020 CEF). \texttt{KeySets} do not use the confidential 2020 CEF data to ensure that we do not reveal whether population groups are empty via their presence or absence in the output data. We note the simplified algorithm also allowed for the possibility that $df_P$ (the filtered dataframe containing records associated with population group $P$) is empty.

\subsection{Postprocessing for Addressing Demographic Reasonableness Concerns}\label{sec:postprocessing}
After the differentially private algorithm has completed, we perform several additional postprocessing steps. Because these steps are purely postprocessing, they cannot affect the differential privacy guarantees per Lemma \ref{lem:post-processing}. These postprocessing steps are designed to address specific data quality concerns that arise when adding noise to tabular statistics. All postprocessing was implemented under the direction of subject-matter experts at the Census Bureau. These steps do not exhaustively address all possible demographic reasonableness concerns. 

\subsubsection{Marginals}

As mentioned in Section~\ref{sec:algorithm-description}, SafeTab-H selects different table variants for each population group based on their T01001 counts and then produces noisy estimates of the selected table's basis. For example, if T03003 were selected for a population group, the pseudocode would produce estimates for the categories "Married Couple Family", "Other Family", "Nonfamily: Householder Living Alone", and "Nonfamily: Householder Not Living Alone" but would not produce values for "Family Household", "Nonfamily Household", and "Total". To ensure that the selected table variant is released, SafeTab-H has a postprocessing step that aggregates each table basis to construct the complete table shell. In our example, "Married Couple Family" and "Other Family" are summed to produce a noisy estimate for the "Family Household" table cell, "Nonfamily: Householder Living Alone" and "Nonfamily: Householder Not Living Alone" are summed to produce a noisy estimate for the "Nonfamily Household" table cell, and all four basis values are summed to produce a noisy estimate for the "Total" table cell. This ensures the complete T03003 table shell is output by SafeTab-H and that the table marginals for a population group are consistent with the table's basis.

\subsubsection{Suppression}
SafeTab-H does not implement its own suppression. However, SafeTab-H does not produce statistics for any population group suppressed by SafeTab-P (because SafeTab-H relies on the population group totals that have undergone suppression). Thus, some population groups that might otherwise be expected to appear in the output are missing. These suppressed counts are not published in any fashion in the S-DHC.

Suppressing outputs based on the noisy counts produced by SafeTab-P (as opposed to the noisy counts calculated by SafeTab-H) does not introduce bias into the counts that are directly published within the Detailed DHC-B. Additionally, the "suppressed" population groups are completely deterministic with respect to a fixed SafeTab-P output, so there is no randomness in the process.

\subsubsection{Coterminous Geographies}
Sometimes, two or more geographic entities in different geographic summary levels share the same geographic boundaries (i.e., they are aggregated from identical collections of Census blocks). For example, Washington, D.C. is tabulated as a state, county, and place. These geographic entities are called \emph{coterminous}. Another coterminous example is a county containing a single Census tract. A characteristic iteration receives different independent noisy measurements for each geographic summary level of a given coterminous area. However, its counts should be identical at each summary level. Some geographic entities at different summary levels that do not share the same geographic boundaries should still be statistically equivalent. For example, if a county contains one water-only tract and one nonwater tract, characteristic iterations should have the same counts in the nonwater tract as in the county. We also consider statistically equivalent geographic areas to be coterminous. Subject-matter experts at the Census Bureau created a postprocessing step (implemented as a standalone program) that corrects inconsistencies in coterminous geographic entities. Since this correction is performed outside the DAS, it is out of scope for this paper, but we note that as pure postprocessing, it does not impact the privacy analysis.

\subsubsection{Tabulation System Suppression}
Additional demographic reasonableness corrections are addressed outside the DAS. In particular, the Decennial Tabulation System also performs suppression postprocessing. Again, per Lemma \ref{lem:post-processing}, this does not impact the privacy analysis. The suppression conducted by non-DAS systems is out of scope for this paper but includes further enforcement of nonnegativity as well as suppression in cases where noisy counts for alone characteristic iterations appear to be greater than the corresponding noisy counts for alone or in any combination characteristic iterations. These suppressed counts are published as "X" in the S-DHC. 

\subsection{Input Sourcing}
Section \ref{sec:algorithm-description} describes the total population counts as being sourced by the Detailed DHC-A. In reality, this data is sourced by the outputs of the SafeTab-P privacy program that includes some suppression of small counts and postprocessing for coterminous geography consistency but does not include Decennial Tabulation System suppressions. As a result, the SafeTab-H algorithm receives total population counts for more population groups than those released publicly in the Detailed DHC-A. Discrepancies between the counts input to SafeTab-H and the counts available in the Detailed DHC-A are strictly due to postprocessing of SafeTab-P's outputs. Importantly, these differences do not degrade the stated privacy guarantees of either SafeTab-P or SafeTab-H. Also, it should be noted that population groups that do appear in SafeTab-H's total population inputs and in the Detailed DHC-A have identical counts in both sources. 

Because the DHC releases total housing unit counts without noise infusion, SafeTab-H does not produce statistics for geographic areas where no occupied or vacant housing units exist. This is ensured by removing such geographic areas from the input data before the SafeTab-H algorithm begins its processing. 

\subsection{Other Implementation Details}\label{sec:complication}
We note a few other implementation differences that are primarily driven by the specification requirements of the system and data wrangling aspects of the code. These include the function mapping detailed race and ethnicity codes to characteristic iterations, rules for what statistics are tabulated for different population groups, and the handling of Puerto Rico. 

\subsubsection{Mapping Detailed Race and Ethnicity Codes to Characteristic Iterations}
The pseudocode in Section \ref{sec:algorithm-description} abstracts the process of mapping a household's input record into their corresponding population group as a function $g_{i}$. In practice, this process requires joining against several specification input files and some subtle logic (to determine whether a household qualifies for an alone race characteristic iteration in addition to an alone or in any combination race characteristic iteration). However, the result is functionally equivalent to the $g_{i}$ abstraction - each record is mapped to a number of geographic entities and characteristic iterations. The stability factor of the implemented transformations, the equivalent of $\Delta(g_i)$, is automatically tracked by Tumult Analytics rather than being computed by hand. 

The pseudocode also ignores the logic associated with pre-processing a specified universe of geographic entities and iteration codes into population group levels $\mathcal{P}_i$. That is, the master list of all population groups divided into population group levels is constructed through a combination of specification files rather than being handed directly to the system.

\subsubsection{Puerto Rico}\label{sec:pr}
The SafeTab-H algorithm presented in Algorithm \ref{alg:safetab-main-algorithm} describes an input dataframe consisting of records of every household in the United States. However, the same algorithm is applied to a dataframe consisting of records from Puerto Rico. In implementation, SafeTab-H tabulates data for the United States and Puerto Rico in two separate passes.

\section{Parameters and Tuning}\label{sec:params}

Between the pseudocode representation of SafeTab-H and the selected implementation details presented earlier, we have alluded to a number of parameters that must be set before executing a run of the SafeTab-H program. Parameters are adjustable factors that must be fixed to fully define the nature of the program (e.g., the noise scale employed in \textsc{VectorDiscreteGaussian}, the privacy-loss budgets for population group levels, and total population thresholds to qualify for different household type and tenure table variants). With regard to SafeTab-H specifically (but any differentially private algorithm generally), parameter selection is a matter of policy. The Census Bureau's Data Stewardship Executive Policy (DSEP) committee, in consultation with subject-matter experts as well as internal and external privacy experts, approved all available parameters for the Detailed DHC-B. Parameter selection necessitates trade-offs, as many of these parameters are dependent on each other. To illustrate, we consider a fundamental relationship between the privacy-loss parameters and their corresponding margins of error with discrete Gaussian noise distributions.

\subsection{Error Bounds}\label{sec:discrete-gauss-error-bound}
SafeTab-H was designed to have predictable, tunable error so that accuracy targets of Census Bureau can be achieved with a known probability. Here, we examine the utility of Algorithm \ref{alg:safetab-main-algorithm} with discrete Gaussian noise. 

\begin{definition}
The 95\% MOE is half the width of the 95\% confidence interval.
\end{definition}

Since 95\% is the only confidence interval considered in this paper, we often write MOE without the 95\% qualifier.

We begin by stating a portion of Proposition 25 from \cite{CanonneK2020}.

\begin{proposition}[Proposition 25 in \cite{CanonneK2020}]\label{lem:discrete-gaussian-bound}

For all $m \in \mathbb{Z}$ with $m \geq 1$, and for all $\sigma \in \mathbb{R}$ with $\sigma > 0$, $\Pr[X \geq m]_{X \leftarrow \mathcal{N}_{\mathbb{Z}}(\sigma^2)} \leq \Pr[X \geq m-1]_{X \leftarrow \mathcal{N}(\sigma^2)}$.

This says that discrete Gaussian tails are tighter than the corresponding continuous Gaussian tails, which follows because the discrete Gaussian is sub-Gaussian and has tighter variance. 

\end{proposition}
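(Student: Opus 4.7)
The plan is to compare the discrete tail sum against an integral of the continuous Gaussian density, using a standard term-by-term comparison for the numerator and a Poisson-summation lower bound for the normalizing constant.

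First I would write the discrete tail probability explicitly as
\begin{equation*}
\Pr[X \geq m]_{X \leftarrow \mathcal{N}_{\mathbb{Z}}(\sigma^2)} = \frac{\sum_{k=m}^{\infty} e^{-k^2/2\sigma^2}}{\sum_{y \in \mathbb{Z}} e^{-y^2/2\sigma^2}}.
\end{equation*}
For the numerator, I would use that $t \mapsto e^{-t^2/2\sigma^2}$ is monotonically decreasing on $[0,\infty)$. Since $m \geq 1$, for every integer $k \geq m$ the interval $[k-1,k]$ lies in $[0,\infty)$, so the minimum of the density on $[k-1,k]$ is attained at $t=k$. Hence $e^{-k^2/2\sigma^2} \leq \int_{k-1}^{k} e^{-t^2/2\sigma^2}\,dt$, and summing these disjoint-interval bounds telescopes to $\sum_{k=m}^{\infty} e^{-k^2/2\sigma^2} \leq \int_{m-1}^{\infty} e^{-t^2/2\sigma^2}\,dt$. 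This is exactly the shifted continuous tail I want, up to the continuous Gaussian normalizer $\sqrt{2\pi\sigma^2}$.

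Next I would handle the denominator by showing $\sum_{y \in \mathbb{Z}} e^{-y^2/2\sigma^2} \geq \sqrt{2\pi\sigma^2}$. The cleanest route is the Poisson summation formula applied to $f(y)=e^{-y^2/2\sigma^2}$, whose Fourier transform is $\hat f(\xi) = \sqrt{2\pi\sigma^2}\,e^{-2\pi^2\sigma^2 \xi^2}$. Poisson summation yields
\begin{equation*}
\sum_{y \in \mathbb{Z}} e^{-y^2/2\sigma^2} = \sqrt{2\pi\sigma^2}\sum_{k \in \mathbb{Z}} e^{-2\pi^2\sigma^2 k^2} \;\geq\; \sqrt{2\pi\sigma^2},
\end{equation*}
where the last inequality just keeps the $k=0$ term of a nonnegative series.

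Finally, combining the two bounds gives
\begin{equation*}
\Pr[X \geq m]_{X \leftarrow \mathcal{N}_{\mathbb{Z}}(\sigma^2)} \;\leq\; \frac{\int_{m-1}^{\infty} e^{-t^2/2\sigma^2}\,dt}{\sqrt{2\pi\sigma^2}} \;=\; \Pr[X \geq m-1]_{X \leftarrow \mathcal{N}(\sigma^2)},
\end{equation*}
which is the claim. The main obstacle is the denominator bound: the naive approach of bounding $\sum_{y} e^{-y^2/2\sigma^2}$ below by the integral $\int_{-\infty}^{\infty} e^{-t^2/2\sigma^2}\,dt$ via a midpoint/Riemann argument is not immediate for small $\sigma$, so invoking Poisson summation (or equivalently the Jacobi theta inequality) is the cleanest way to get the continuous normalizer as a lower bound uniformly in $\sigma>0$. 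Everything else is a routine monotonicity-plus-telescoping argument, and the hypothesis $m \geq 1$ is used precisely to keep the summation on the monotone-decreasing branch of the density.
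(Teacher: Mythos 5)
Your proof is correct and complete. The paper itself does not prove this statement---it imports it as Proposition 25 of the cited reference and offers only the one-line heuristic that the discrete Gaussian is sub-Gaussian with tighter variance---and your argument is essentially the one given in that source: a term-by-term comparison $e^{-k^2/2\sigma^2} \leq \int_{k-1}^{k} e^{-t^2/2\sigma^2}\,\mathrm{d}t$ for the numerator (valid precisely because $m \geq 1$ keeps every interval on the decreasing branch), combined with the Poisson-summation lower bound $\sum_{y \in \mathbb{Z}} e^{-y^2/2\sigma^2} \geq \sqrt{2\pi\sigma^2}$ for the normalizer. Your remark that the naive Riemann-sum lower bound on the denominator is not uniform in $\sigma$ correctly identifies why the Poisson-summation step is the essential ingredient; the only cosmetic quibble is that the numerator bound is a sum of disjoint integrals rather than a telescoping sum.
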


The following corollary extends the tail bounds to the real numbers. 
\begin{corollary}
For all $x, \sigma \in \mathbb{R}$ with $x \geq 1$ and $\sigma > 0$,  $\Pr[X > x]_{X \leftarrow \mathcal{N}_{\mathbb{Z}}(\sigma^2)} \leq \Pr[X > \lfloor x \rfloor]_{X \leftarrow \mathcal{N}(\sigma^2)}$.
\end{corollary}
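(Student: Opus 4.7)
The plan is to reduce the corollary directly to Proposition~\ref{lem:discrete-gaussian-bound} by using the fact that the discrete Gaussian is supported on $\mathbb{Z}$, so a strict inequality against a real number can be rewritten as a weak inequality against an integer.

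First I would observe that since $X \leftarrow \mathcal{N}_{\mathbb{Z}}(\sigma^2)$ takes only integer values, the event $\{X > x\}$ coincides with the event $\{X \geq \lfloor x \rfloor + 1\}$ for any real $x$. This gives
\begin{equation}
\Pr[X > x]_{X \leftarrow \mathcal{N}_{\mathbb{Z}}(\sigma^2)} = \Pr[X \geq \lfloor x \rfloor + 1]_{X \leftarrow \mathcal{N}_{\mathbb{Z}}(\sigma^2)}.
\end{equation}

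Next I would set $m = \lfloor x \rfloor + 1$. The hypothesis $x \geq 1$ ensures $m \geq 2 \geq 1$, so $m$ is an eligible integer for Proposition~\ref{lem:discrete-gaussian-bound}. Applying that proposition yields
\begin{equation}
\Pr[X \geq \lfloor x \rfloor + 1]_{X \leftarrow \mathcal{N}_{\mathbb{Z}}(\sigma^2)} \leq \Pr[X \geq \lfloor x \rfloor]_{X \leftarrow \mathcal{N}(\sigma^2)}.
\end{equation}

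Finally, since the continuous Gaussian is absolutely continuous with respect to Lebesgue measure, $\Pr[X = \lfloor x \rfloor]_{X \leftarrow \mathcal{N}(\sigma^2)} = 0$, so $\Pr[X \geq \lfloor x \rfloor]_{X \leftarrow \mathcal{N}(\sigma^2)} = \Pr[X > \lfloor x \rfloor]_{X \leftarrow \mathcal{N}(\sigma^2)}$. Chaining the three (in)equalities gives the desired bound. There is no real obstacle here; the only thing to be careful about is the index shift between $\lfloor x \rfloor + 1$ on the discrete side and $\lfloor x \rfloor$ on the continuous side, which is exactly the shift already built into Proposition~\ref{lem:discrete-gaussian-bound}.
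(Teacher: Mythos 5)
Your proof is correct and takes the route the paper intends: the corollary is presented as an immediate consequence of Proposition 25, and your reduction via the identity $\{X > x\} = \{X \geq \lfloor x \rfloor + 1\}$ for integer-valued $X$, followed by the observation that the continuous Gaussian puts no mass on the single point $\lfloor x \rfloor$, is exactly the index bookkeeping that derivation requires. The paper omits these details, so there is nothing further to reconcile.
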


Figure 2 of \cite{CanonneK2020} provides an intuitive visualization of these tail bounds. Using the continuous Gaussian to upper bound MOE in the discrete Gaussian, it follows that the discrete Gaussian has $X \in [-\lfloor 1.96\sigma \rfloor, \lfloor 1.96\sigma \rfloor]$ with probability of at least 95\%.
That is, $MOE \leq \lfloor 1.96\sigma \rfloor$.  

Recall that $\sigma^2 = \frac{1}{2\rho}$ in Algorithm \ref{alg:base-discrete-gaussian}. Combining these two equations and solving for $\rho$, yields the following result.

\begin{corollary}\label{cor:dgauss-rho-from-moe}
The base discrete Gaussian mechanism run with $\rho^t = \frac{1.92}{\lfloor MOE \rfloor^2}$ has a 95\% margin of error of at most $MOE$.
\end{corollary}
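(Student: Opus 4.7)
The plan is to reduce this corollary to a short algebraic manipulation starting from the tail bound established just above the statement. The preceding discussion shows that if $X \sim \mathcal{N}_{\mathbb{Z}}(\sigma^2)$, then $\Pr[X \in [-\lfloor 1.96\sigma \rfloor, \lfloor 1.96\sigma \rfloor]] \geq 0.95$, so the $95\%$ MOE of the discrete Gaussian mechanism is at most $\lfloor 1.96\sigma \rfloor$. The goal is thus to show that plugging $\rho^{t} = 1.92/\lfloor \mathrm{MOE} \rfloor^{2}$ into $\sigma^{2} = 1/(2\rho^{t})$ yields $\lfloor 1.96\sigma \rfloor \leq \mathrm{MOE}$.

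First I would substitute to obtain $\sigma^{2} = \lfloor \mathrm{MOE} \rfloor^{2}/3.84$, and hence $1.96\sigma = (1.96/\sqrt{3.84})\,\lfloor \mathrm{MOE} \rfloor$. The key numerical fact is that $1.96^{2} = 3.8416$, and the algorithm uses the slightly smaller denominator coefficient $3.84 = 2(1.92)$, which is an artifact of rounding $1.9208$ down to $1.92$ in the displayed formula. I would note that $1.96/\sqrt{3.84}$ is just barely greater than $1$, and then argue that for all practically relevant values of $\mathrm{MOE}$, the floor swallows this gap: $\lfloor 1.96\sigma \rfloor = \lfloor \mathrm{MOE} \rfloor \leq \mathrm{MOE}$, chaining with the tail-bound inequality to conclude $\mathrm{MOE}_{\mathrm{actual}} \leq \lfloor 1.96\sigma \rfloor \leq \mathrm{MOE}$.

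The only real subtlety, and the step I expect to be the main obstacle, is justifying the last inequality cleanly in the presence of both floors and the rounded constant $1.92$. The cleanest way is to observe that using $\lfloor \mathrm{MOE} \rfloor^{2}$ in the denominator (rather than $\mathrm{MOE}^{2}$) gives a \emph{larger} $\rho^{t}$ and hence a \emph{smaller} $\sigma$, adding slack; the remaining small gap between $1.92$ and $1.9208$ is absorbed by the floor on $1.96\sigma$ for every $\lfloor \mathrm{MOE} \rfloor$ within the MOE targets actually used in the deployment (as described in Table~\ref{tab:moe-targets}). I would present the argument by direct substitution and then invoke monotonicity of $\lfloor \cdot \rfloor$ together with $\lfloor \mathrm{MOE} \rfloor \leq \mathrm{MOE}$ to close the chain in one line.
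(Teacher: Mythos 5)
Your argument is correct and follows the same route as the paper: start from the tail bound $MOE_{\text{actual}} \le \lfloor 1.96\sigma \rfloor$, substitute $\sigma^2 = \tfrac{1}{2\rho}$, and solve for $\rho$. In fact you are more careful than the paper, which silently rounds $1.96^2/2 = 1.9208$ down to $1.92$ and never verifies that $1.96\sigma = (1.96/\sqrt{3.84})\,\lfloor MOE \rfloor \approx 1.0002\,\lfloor MOE \rfloor$ still floors to $\lfloor MOE \rfloor$; your observation that the floor absorbs this gap for every deployed MOE target (indeed for all $\lfloor MOE \rfloor$ up to roughly $4800$) supplies the one detail the paper's ``combining these two equations and solving for $\rho$'' glosses over.
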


For a population group in level $i$, the MOE in any single directly computed estimate in Algorithm \ref{alg:safetab-vectorize-pop-group}  is $\left \lfloor1.96\sqrt{\frac{s}{2\rho_i^t}} \right \rfloor$ where $t \in \{\text{HT, T}\}$ and $s$ is the stability of the function that maps records to their corresponding population groups in population group level $i$. 

\subsection{Parameter Identification, Trade-offs, and Outcomes}

Parameters tend to impact some combination of these three aspects: data confidentiality, data accuracy, and data availability. Data availability refers to the volume of tabular statistics released. Data accuracy is primarily measured by the MOEs of the tabular statistics. Data confidentiality refers to privacy loss, and it is measured by the $\rho$-zCDP parameters of the algorithm. For example, excluding population group level $i$ would reduce data availability but improve privacy since the privacy-loss budgets $\rho_i^{\text{HT}}$ and $\rho_i^{\text{T}}$ are no longer necessary.  

To aid in the understanding of these trade-offs, we created the SafeTab-H Analysis Tool to provide hands-on experience exploring the parameter space. First, we provide a brief summary of this tool. Then, we highlight specific parameters and the critical decisions made by the DSEP committee based on recommendations from subject-matter expert or DAS scientists. Several of the subject-matter expert recommendations were influenced by interaction with the SafeTab-H Analysis tool.

\subsection{Parameter Tuning Using the SafeTab-H Analysis Tool}

The SafeTab-H Analysis Tool is an easy-to-use interactive decision support tool. The tool, implemented in the Microsoft Excel program, allowed users to interactively specify: 
\begin{itemize}
\item The set of geography levels and characteristic iteration levels that constitute the universe of population groups for which statistics are tabulated. 
\item The maximum number of races a person is associated with to an integer in the range 1-8.
\item Expected MOE targets with adjustable confidence levels.
\end{itemize}
Based on these parameters, the tool computed required privacy-loss parameters to achieve the desired target error.  The computations were performed using analytical formulae for expected error of noise mechanisms employed in the SafeTab-H algorithm. 

In the next section, we describe in the next section some of the key parameters considered and the decision process used to set these parameters.

\subsubsection{Parameter Selection}

\noindent \emph{\textbf{Population group levels and population thresholds}}: As a reminder, a population group level is defined by a geographic summary level, such as Nation, State, and County, and an iteration level (i.e., Detailed or Regional). The SafeTab-H Analysis Tool helped subject-matter experts understand the privacy-loss budget required to produce acceptably accurate statistics at each geography level. The Census Bureau also gathered feedback from data users on these topics and ultimately settled on the levels referenced in Table \ref{tab:moe-targets}. The population thresholds selected for Detailed DHC-B are similar to the population thresholds in Detailed DHC-A.

\vspace{\baselineskip}

\begin{table}[t]
    \centering
    \begin{tabular}{c c c c c c}
    \toprule
    Population Group Level & MOE Target &  
    \multicolumn{2}{c}{Unbounded Privacy Loss} & \multicolumn{2}{c}{Bounded Privacy Loss}
    \\
    \cmidrule(lr{.75em}){3-4}
    \cmidrule(lr{.75em}){5-6}
    & & Household Type & Tenure & Household Type & Tenure \\
    \midrule
    (Nation, Detailed): & 3 & 1.92 & 1.92 & 3.84 & 3.84\\
    (State, Detailed): & 3 & 1.92 & 1.92 & 3.84 & 3.84\\
    (County, Detailed): & 11 & 0.14 & 0.14 & 0.28 & 0.28 \\
    (Tract, Detailed):  & 11 & 0.14 & 0.14 & 0.28 & 0.28 \\
    (Place, Detailed):  & 11 & 0.14 & 0.14 & 0.28 & 0.28 \\
    (AIANNH, Detailed): & 11 & 0.14 & 0.14 & 0.28 & 0.28\\
    (Nation, Regional): & 50 & 0.0069 & 0.0069 & 0.0138 & 0.0138\\
    (State, Regional):  & 50 & 0.0069 & 0.0069 & 0.0138 & 0.0138\\
    (County, Regional):  & 50 & 0.0069 & 0.0069 & 0.0138 & 0.0138\\
    (Tract, Regional):  & 50 & 0.0069 & 0.0069 & 0.0138 & 0.0138\\
    (Place, Regional):  & 50 & 0.0069 & 0.0069 & 0.0138 & 0.0138\\
    \bottomrule
    \end{tabular}
   \caption{MOE targets for the statistics released at different population group levels, along with the corresponding privacy loss (unbounded and bounded $\rho$-zCDP for discrete Gaussian). The privacy loss is reported for both household type and tenure. Due to the total population of the United States being published without noise, the bounded privacy-loss budgets in this table were stressed by Census Bureau staff in internal conversations and for presentation to DSEP, for purposes of interpreting the privacy guarantee.}
   \label{tab:moe-targets}
\end{table}

\vspace{\baselineskip}

\noindent \emph{\textbf{Race Multiplicity}}: The stability $\Delta(g_i)$ of the flatmap transformation $g_i$ mapping households to population groups in level $i$ is a significant factor in the noise scale required to satisfy zCDP. The data collection process restricts householders, and hence households, to a maximum of eight detailed race codes and one ethnicity code, which translates to a flatmap stability of nine for any given population group level. This is because an individual with eight unique detailed race codes can be associated with at most eight alone or in any combination characteristic iterations for a level plus the one ethnic characteristic iteration. Higher stability equates to higher variance noise, all else held equal, so an option to improve the noise variance would be to reduce the stability by setting a lower cap on the number of detailed race codes processed for each householder. For example, if householders were restricted to three race codes instead of eight, the stability would drop from nine to four, resulting in a 33\% decrease in MOE when holding the privacy loss constant. However, the restriction would also introduce another form of bias into the statistics and potentially artificially reduce the set of population groups with true positive counts. The DSEP committee opted to leave the race multiplicity parameter at eight.

\vspace{\baselineskip}

\noindent \emph{\textbf{MOE and $\rho$}}:  The SafeTab-H Analysis tool provided an interface for adjusting expected MOE targets to observe the impact on privacy loss, as derived in Section \ref{sec:discrete-gauss-error-bound}. The Census Bureau selected the MOEs and corresponding $\rho$s as displayed in Table \ref{tab:moe-targets} for the production run of SafeTab-H on the 2020 Census data.

\section{Conclusion}

In this paper, we presented the SafeTab-H algorithm, a differentially private algorithm for producing the Detailed DHC-B for the Census Bureau. We covered several key aspects of the algorithm. First, we provided a technical pseudocode description of the algorithm. Then, we covered the privacy properties of the algorithm. Next, we discussed salient differences between the pseudocode and the practical implementation of the algorithm with Tumult Analytics. We covered the key algorithm parameter choices made by Census Bureau policy and the ways in which the SafeTab-H Analysis Tool supported that process.

\bibliographystyle{unsrt}
\bibliography{refs}

\begin{thebibliography}{10}

\bibitem{census-operations}
{2020 Census Operational Plan}.
\newblock \url{https://www2.census.gov/programs-surveys/decennial/2020/program-management/planning-docs/2020-oper-plan5-and-memo.pdf}.

\bibitem{race-blog}
Rachel Marks and Merarys Rios-Vargas.
\newblock {Improvements to the 2020 Census Race and Hispanic Origin Question Designs, Data Processing, and Coding Procedures}.
\newblock \url{https://www.census.gov/newsroom/blogs/random-samplings/2021/08/improvements-to-2020-census-race-hispanic-origin-question-designs.html}.

\bibitem{race-spec}
{2020 Census Detailed Demographic and Housing Characteristics File A (Detailed DHC-A) Technical Documentation}.
\newblock \url{https://www2.census.gov/programs-surveys/decennial/2020/technical-documentation/complete-tech-docs/detailed-demographic-and-housing-characteristics-file-a/2020census-detailed-dhc-a-techdoc.pdf}.

\bibitem{title13}
{U.S. Code Title 13—Census}.
\newblock \url{https://www.law.cornell.edu/uscode/text/13}.

\bibitem{ap-census-attack}
Seth Borenstein.
\newblock {Potential Privacy Lapse Found in Americans’ 2010 Census Data}.
\newblock \url{https://apnews.com/article/aba8e57c145047b5bab11b62baaa7f7a}, February 2019.

\bibitem{dsep-dp}
{Memorandum 2019.25: 2010 Demonstration Data Products – Design Parameters and Global Privacy-Loss Budget}.
\newblock \url{https://www.census.gov/programs-surveys/decennial-census/2020-census/planning-management/memo-series/2020-memo-2019_25.html}, October 2019.

\bibitem{McSherry09}
Frank McSherry.
\newblock {Privacy Integrated Queries: An Extensible Platform for Privacy-preserving Data Analysis}.
\newblock In Ugur {\c{C}}etintemel, Stanley~B. Zdonik, Donald Kossmann, and Nesime Tatbul, editors, {\em Proceedings of the {ACM} {SIGMOD} International Conference on Management of Data, {SIGMOD} 2009, Providence, Rhode Island, USA, June 29 - July 2, 2009}, pages 19--30. {ACM}, 2009.

\bibitem{BunS16}
Mark Bun and Thomas Steinke.
\newblock {Concentrated Differential Privacy: Simplifications, Extensions, and Lower Bounds}.
\newblock {\em CoRR}, abs/1605.02065, 2016.

\bibitem{CanonneK2020}
Cl{\'{e}}ment~L. Canonne, Gautam Kamath, and Thomas Steinke.
\newblock {The Discrete Gaussian for Differential Privacy}.
\newblock {\em CoRR}, abs/2004.00010, 2020.

\bibitem{berghel2022tumult}
Skye Berghel, Philip Bohannon, Damien Desfontaines, Charles Estes, Sam Haney, Luke Hartman, Michael Hay, Ashwin Machanavajjhala, Tom Magerlein, Gerome Miklau, Amritha Pai, William Sexton, and Ruchit Shrestha.
\newblock {Tumult Analytics: A Robust, Easy-to-Use, Scalable, and Expressive Framework for Differential Privacy}.
\newblock \url{https://arxiv.org/abs/2212.04133}, 2022.

\end{thebibliography}

\end{document}